\documentclass[a4paper,12pt]{article}
\usepackage{amssymb}
\usepackage{amsfonts}
\usepackage{amsmath}
\usepackage{geometry}
\usepackage{setspace}
\usepackage[round]{natbib}
\usepackage[pdftex]{graphicx}
\usepackage{epstopdf}
\usepackage{framed}
\usepackage{rotating,booktabs}
\usepackage{color}
\usepackage{algorithm}
\usepackage{multirow}
\usepackage{caption}
\usepackage{subcaption}
\usepackage{algpseudocode}
\newtheorem{theorem}{Theorem}

\newenvironment{proof}[1][Proof]{\textbf{#1.} }{\ \rule{0.5em}{0.5em}}

\begin{document}
\title{\textbf{Quantifying Omitted Variable Bias in Nonlinear Instrumental Variable Estimators}\thanks{We thank Carlos Cinelli and Ting-Yu Kuo for constructive discussions and seminar participants in 2024 Annual Meeting of Taiwan Econometric Society (National Taiwan Normal University), 2024 Macroeconometric Modelling Workshop (Academia Sinica), the 8th International Conference on Econometrics and Statistics (EcoSta 2025, Waseda University) and National Taiwan University for helpful comments.}}
\author{Yu-Min Yen\thanks{Department of International Business, National Chengchi University, 64, Section 2, Zhi-nan Road, Wenshan, Taipei 116, Taiwan. E-mail: \texttt{yyu\_min@nccu.edu.tw.}}}
\date{\today}
\maketitle
\begin{abstract}
We develop a framework for quantifying omitted variable bias (OVB) in nonlinear instrumental variable (IV) estimators, including the local average treatment effect (LATE), the LATE for the treated (LATT), and the partially linear IV model (PLIVM). Extending sensitivity analysis beyond linear settings, we derive bias decompositions, establish partial identification bounds, and construct OVB-adjusted confidence intervals. We estimate OVB bounds and conduct inference using double machine learning (DML), allowing flexible control for high-dimensional covariates. An application to the U.S. Job Training Partnership Act (JTPA) experiment shows that, at conventional significance levels, first-stage compliance estimates are robust to omitted variables, whereas intention-to-treat and treatment effects are more sensitive. Program impacts are robust and significant for females but fragile for males.	
\\\\
\textbf{Keywords: }Causal inference, Machine learning, Microeconometris, Sensitivity analysis, Partial identification\\
\end{abstract}
\clearpage

\onehalfspacing
\section{Introduction}
Instrumental variable (IV) estimators are widely used to address endogeneity, but their validity is compromised when relevant variables are omitted. This paper extends the results of \citet{CCNSS_2024} by quantifying omitted variable bias (OVB) in a broad class of IV estimators, including the partially linear IV model (PLIVM) and nonlinear estimators such as the local average treatment effect (LATE) and the local average treatment effect for the treated (LATT). 
Let $Z$ denote the instrumental variable, $X$ a set of observable covariates and $A$ a set of unobservable (or omitted) covariates. Define $W:=(Z,X,A)$ and $W_{s}:=(Z,X)$. Let $Y$
denote the outcome (dependent variable) and $D$ the treatment, which may be endogenous. Many IV estimators can be written in the following form: 
\begin{equation}
	\theta=\frac{\lambda}{\gamma},\label{main_estimand_long_version}
\end{equation}where
\[\lambda = E[\alpha(W)g_{Y}(W)] \text{ and }\gamma =E[\alpha(W)g_{D}(W)].
\]
Inside the above expectations, $\alpha(W)$ is a weighting function, $g_{Y}(W)=E[Y|W]$ and
$g_{D}(W)=E[D|W]$. When only $W_s$ is available, the short version of (\ref{main_estimand_long_version}) is given by 
\begin{equation}
	\theta_{s}=\frac{\lambda_s}{\gamma_s},\label{main_estimand_short_version}
\end{equation}where
\[
\lambda_s = E[\alpha_{s}(W_{s})g_{Ys}(W_{s})]\text{ and }\gamma_s = E[\alpha_{s}(W_{s})g_{Ds}(W_{s})]
\]
Again, inside the above expectations, $\alpha_{s}(W_{s})$ is a weighting and $g_{Ys}(W_{s})=E[Y|W_{s}]$
and $g_{Ds}(W_{s})=E[D|W_{s}]$. \citet{CCNSS_2024} refer to $\alpha(W)$ and $\alpha_{s}(W_s)$ as the long and short \textit{Riesz representers (RR)}. 

We assume that with $W$, $\theta$ correctly identifies the interested parameter. However, with $W_s$, $\theta_s$ in general does not correctly identify the interested parameter. The central object of this paper are to characterize magnitude of the omitted variable bias (OVB) caused by using $\theta_s$:
\[
|\text{Bias}| := |\theta - \theta_s|,
\]
and construct the OVB bounds for partially identifying $\theta$. We also will develop relevant statistical inference tools for the OVB analysis.

Many IV estimators admit the form
of (\ref{main_estimand_long_version}), and we give several examples
as follows.

\textbf{Example 1 (PLIVM)}: Consider the omitted variable bias (OVB) of a partially linear instrumental variable (IV) regression. We assume that the dependent
variable $Y$ and endogenous variable $D$ (can be continuous) have
a form of partial linearity as follows:
\begin{align}
	Y & =\theta D+f(X,A)+u_{Y},\label{Y_plm}\\
	D & =\gamma Z+h(X,A)+u_{D}.\label{D_plm}
\end{align}
Our goal is to estimate the coefficient $\theta$. Following \citet{CCNSS_2024}, we will refer (\ref{Y_plm}) and (\ref{D_plm}) as long versions of $Y$ and $D$ (since they are constructed with a completed set of variables $W$). Assume $E[u_{Y}|W]=0$ and $E[u_{D}|W]=0$, but $E[u_{Y}|D]\neq0$ and there is endogeneity. When endogeneity is present, $\theta$ can be identified with a two-stage procedure. At first, we rewrite $Y$ as a reduced form
\[
Y=\lambda Z+k(X,A)+\varepsilon_{Y},
\]
where $\lambda=\theta\gamma$, $k(X,A)=\theta h(X,A)+f(X,A)$ and
$\varepsilon_{Y}=\theta u_{D}+u_{Y}.$ Note that $E[\varepsilon_{Y}|W]=0$, and then $\lambda$ and $\gamma$ can be identified as
\[
\lambda =E[\alpha(W)g_{Y}(W)], \gamma =E[\alpha(W)g_{D}(W)],
\]
where
\begin{align*}
	\alpha(W) & =\frac{Z-E[Z|X,A]}{E[(Z-E[Z|X,A])^{2}]},
\end{align*}
$g_{Y}(W)=E[Y|W]$ and $g_{D}(W)=E[D|W]$. Then we can have 
\[
\theta=\frac{\lambda}{\gamma}=\frac{E[\alpha(W) g_{Y}(W)]}{E[\alpha(W) g_{D}(W)]},
\]given that $\gamma\neq 0$.
With $W_{s}:=(Z,X)$, the short versions of $Y$ and $D$ in (\ref{Y_plm}) and (\ref{D_plm}) are given by
\begin{align}
	Y & =\theta_{s}D+f_{s}(X)+u_{Ys},\label{Y_plm_s}\\
	D & =\gamma_{s}Z+h_{s}(X)+u_{D_{s}}.\label{D_plm_s}
\end{align}
Following similar two-stage procedure above, the $\theta_{s}$ in short version (\ref{Y_plm_s}) can be identified as
\[
\theta_{s}=\frac{\lambda_{s}}{\gamma_{s}}=\frac{E[\alpha_{s}(W_{s})g_{Ys}(W_{s})]}{E[\alpha_{s}(W_{s})g_{Ds}(W_{s})]},
\]
given that $\gamma_s\neq 0$, where 
\[
\alpha_{s}(W_{s})=\frac{Z-E[Z|X]}{E[(Z-E[Z|X])^{2}]},
\]
and $g_{Ys}(W_{s})=E[Y|W_{s}]$ and $g_{Ds}=E[D|W_{s}]$. 

\textbf{Example 2 (LATE)}: Consider the two-sided non-compliance framework. Let $Y_d$ denote the potential outcomes when the treatment variable $D=d$. Let $T\in \{AT,NT,C\}$ denote type of an individual, where $AT$ refer to the always taker, $NT$ refer to the never taker and $C$ refer to the complier. Under certain assumptions \citep{Frolich_2007}, the local average treatment effect (average treatment effect of the complier group, LATE, \citet{IA_1994}) \[\text{LATE}:= E[Y_{1}-Y_{0}|T=C]\] can be identified as
\begin{equation}
	\theta = \frac{\text{ITT}}{P(T=C)}=\frac{\lambda}{\gamma}=\frac{E[\alpha(W)g_{Y}(W)]}{E[\alpha(W)g_{D}(W)]},
	\label{LATE}
\end{equation}
where ITT is the intention-to-treat effect and $P(T=C)$ is the probability that an individual is a complier. In the expectations, the weight function $\alpha(W)$ is given by:
\[
\alpha(W) =\frac{Z}{\pi(X,A)}-\frac{1-Z}{1-\pi(X,A)},
\]where $\pi(X,A) =P(Z=1|X,A)$ is the propensity score function of the instrumental variable, and $g_{Y}(W)=E[Y|W]$ and $g_{D}(W)=E[D|W]$. Equation (\ref{LATE}) is a ratio of two inverse propensity score weighting estimands. The short version of $\theta$ is given by
\begin{equation}
	\theta_{s} =\frac{\lambda_s}{\gamma_s}=\frac{E\left[\alpha_{s}(W_{s})g_{Ys}(W_{s})\right]}{E[\alpha_{s}(W_{s})g_{Ds}(W_{s})]},\label{LATE_s}
\end{equation}
where
\[\alpha_{s}(W_{s}) =\frac{Z}{\pi_{s}(X)}-\frac{1-Z}{1-\pi_{s}(X)}, \pi_{s}(X) =P(Z=1|X),\]
and $g_{Ys}(W_{s})=E[Y|W_{s}]$ and $g_{Ds}(W)=E[D|W_{s}]$.

\textbf{Example 3 (LATT)}: The local average treatment effect on the treated (LATT) is defined as
\[
\text{LATT} := E[Y_1-Y_0|T=C,D=1].
\]That is, LATE for the treated compliers. Under the same assumptions for identifying LATE, LATT can also be identified as $\theta = \lambda/\gamma$. But the weight functions $\alpha(W)$ and $\alpha(W_s)$ for LATT become
\begin{eqnarray}
	\alpha(W) 
	&=&\frac{1}{P_Z}\left[Z-\frac{\pi(X,A)}{1-\pi(X,A)}(1-Z)\right],\label{LATT_alpha}\\
	\alpha_{s}(W_{s}) &=&\frac{1}{P_Z}\left[Z-\frac{\pi_{s}(X)}{1-\pi_{s}(X)}(1-Z)\right],\label{LATT_alphas}
\end{eqnarray}where $P_Z = P(Z=1)$.
The function $g_Y(W)$, $g_{Y_s}(W_s)$,  $g_D(W)$ and $g_{D_s}(W_s)$ are the same as in LATE. 

The rest of the paper is organized as follows. Section 2 introduces the proposed method for OVB analysis of IV estimators, including the construction of OVB bounds, a set of statistical inference tools and a method for estimating them using double machine learning (DML). Section 3 applies the method to an empirical analysis of LATE and LATT using the classical JTPA data. Section 4 concludes.

\section{Methodology}
\subsection{The OVB Bounds for $\lambda$ and $\gamma$}
To quantify the OVB of $\theta_{s}\in \boldsymbol{\Theta}_s$, instead of directly calculating the OVB through comparing $\theta\in \boldsymbol{\Theta}$ and $\theta_s$, we exploit results from inference with weak instrument variables (section 13.3 in \citet{CHKSS_2024}). A similar strategy, based on the Anderson–Rubin regression, was also adopted by \citet{CH_2025} to construct the OVB bound in a linear IV model. Suppose we would like to test $H_{0}:\theta=\theta_0$, where $\theta_0\in \boldsymbol{\Theta}_0\subseteq \boldsymbol{\Theta}$. Let $\phi_{t}:=\lambda-\gamma t$ and
testing $H_{0}$ is equivalent to testing $H_{0}^{\prime}:\phi_{\theta_0}=0$. We next show that $\phi_{t}$ can be partially identified when the short version estimands $\lambda_s$ and $\gamma_s$ are used. To simplify the notations, we use $(\alpha,\alpha_s, g_Y, g_{Ys}, g_D, g_{Ds})$ to replace $(\alpha(W),\alpha_s(W_s), g_Y(W), g_{Ys}(W_s), g_D(W), g_{Ds}(W_s))$ in the following discussion. 

At first, bias of $\lambda_{s}$ can be expressed as:
\begin{align}
	\lambda-\lambda_{s} & =E[(\alpha-\alpha_{s})(g_{Y}-g_{Ys})],\label{OVB_lambda}
\end{align}
by using the result in \citet{CCNSS_2024}. A key condition to achieving (\ref{OVB_lambda}) is that $E[g_{Y_s}(\alpha-\alpha_s)]=0$, which holds for LATE, LATT and PLIVM. With some calculations, we can have the following result for the squared bias of $\lambda_s$:
\begin{equation}
	|\lambda-\lambda_{s}|^{2}=\rho_{Y}^{2}B_{Y}^{2},\label{squared_bias_lam}
\end{equation}
where $\rho_{Y}^{2}:=Cor^{2}(\alpha-\alpha_{s},g_{Y}-g_{Ys})$, $B_{Y}^{2}=C_{Y}^{2}C_{\alpha}^{2}S_{Y}^{2}$
and
\[C_{Y}^{2} = \frac{E[(g_Y-g_{Y_s})^2]}{E[(Y-g_{Y_s})^2]}, C_{\alpha}^{2}  =  \frac{E[(\alpha - \alpha_s)^2]}{E[\alpha_{s}^{2}]}, S_{Y}^{2} = E[(Y-g_{Ys})^{2}]E[\alpha_{s}^{2}]=\sigma_{Y_{s}}^{2}v_{s}^{2}.\]
$C_Y$ and $C_{\alpha}$ are referred to sensitivity parameters in the OVB analysis, and in practice, they can be specified by researchers according to domain knowledge of the empirical study. $S_{Y}^{2}$ can be directly estimated with data. With the above results, we can have 
\begin{equation}
	\lambda^{-}\leq\lambda\leq\lambda^{+},\label{OVB_bound_lambda}
\end{equation}
where $\lambda^{-}:=\lambda_{s}-|\rho_{Y}|B_{Y}$ and $\lambda^{+}:=\lambda_{s}+|\rho_{Y}|B_{Y}$
by using the fact that $C_Y$, $C_{\alpha}$ and $S_Y$ are all nonnegative. Similarly, for $\gamma$ and $\gamma_{s}$, by using the result in \citet{CCNSS_2024}, we can have:
\begin{align}
	\gamma-\gamma_{s} & =E[(\alpha-\alpha_{s})(g_{D}-g_{Ds})].\label{OVB_gamma}
\end{align} A key condition to achieving (\ref{OVB_gamma}) is that $E[g_{D_s}(\alpha-\alpha_s)]=0$, which holds for LATE, LATT and PLIVM. Then following similar procedures for deriving (\ref{squared_bias_lam}), we can have: 
\begin{equation}
	|\gamma-\gamma_{s}|^{2}=\rho_{D}^{2}B_{D}^{2},\label{squared_bias_gam}
\end{equation} where
$\rho_{D}^{2}:=Cor^{2}(\alpha-\alpha_{s},g_{D}-g_{Ds})$, $B_{D}^{2}=C_{D}^{2}C_{\alpha}^{2}S_{D}^{2}$
and
\[C_{D}^{2} = \frac{E[(g_D-g_{Ds})^2]}{E[(D-g_{Ds})^2]}, S_{D}^{2}= E[(D-g_{Ds})^{2}]E[\alpha_{s}^{2}] = \sigma_{Ds}^{2}v_{s}^{2}.\]
Finally, it can be shown that
\begin{equation}
	\gamma^{-}\leq\gamma\leq\gamma^{+},\label{OVB_bound_gamma}
\end{equation}
where $\gamma^{-}:=\gamma_{s}-|\rho_{D}|B_{D}$ and $\gamma^{+}:=\gamma_{s}+|\rho_{D}|B_{D}$
by using the fact that $C_D$, $C_{\alpha}$ and $S_D$ are all nonnegative.

\subsection{Constructing the OVB Bound for $\theta$}
Combining the above results yields the following partial identification result for $\phi_{t}$:
\begin{equation}
	\min\{ \lambda^{-}-\gamma^{+}t,\lambda^{-}-\gamma^{-}t\} \leq\phi_{t}\leq\max\{ \lambda^{+}-\gamma^{+}t,\lambda^{+}-\gamma^{-}t\} \label{OVB_phi}
\end{equation}for $t\in \boldsymbol{\Theta}_0$. With some algebra, (\ref{OVB_phi}) can be further rewritten as:
\begin{equation}
	\phi_{t}^{-}\leq\phi_{t}\leq \phi_{t}^{+},\label{OVB_AR}
\end{equation}
where
\begin{align*}
	\phi_{t}^{+} & =\lambda^{+}-\gamma^{-}t1\{t\geq0\}-\gamma^{+}t1\{t<0\},\\
	\phi_{t}^{-} & =\lambda^{-}-\gamma^{+}t1\{t\geq0\}-\gamma^{-}t1\{t<0\}.
\end{align*}
Using the result in (\ref{OVB_AR}), we derive the following partial identification results for $\theta_0$. 
\begin{theorem}
	Suppose that $\theta=\theta_0$ and $\phi_{\theta_0} = \lambda - \gamma \theta_0$ satisfies (\ref{OVB_AR}): 	$\phi_{\theta_0}^{-}\leq\phi_{\theta_0}\leq \phi_{\theta_0}^{+}$.
	\begin{enumerate}
		\item When $(\gamma^{-},\gamma^{+})\in\mathbb{R}^{++}$:
		\begin{itemize}
			\item If $(\lambda^{-},\lambda^{+})\in\mathbb{R}^{++}$, then
			$\theta_0\in [\lambda^{-}/\gamma^{+},\lambda^{+}/\gamma^{-}].$
			\item If $(\lambda^{-},\lambda^{+})\in\mathbb{R}^{--}$, then
			$\theta_0\in [\lambda^{-}/\gamma^{-},\lambda^{+}/\gamma^{+}].$
			\item If $\lambda^{-}$ and $\lambda^{+}$ have different signs, then
			$\theta_0\in [\lambda^{-}/\gamma^{-},\lambda^{+}/\gamma^{-}]$. 
		\end{itemize}
		\item When $(\gamma^{-},\gamma^{+})\in\mathbb{R}^{--}$:
		\begin{itemize}
			\item If $(\lambda^{-},\lambda^{+})\in\mathbb{R}^{++}$, then $\theta_0 \in [\lambda^{+}/\gamma^{+},\lambda^{-}/\gamma^{-}].$
			\item If $(\lambda^{-},\lambda^{+})\in\mathbb{R}^{--}$, then $\theta_0 \in[\lambda^{+}/\gamma^{-},\lambda^{-}/\gamma^{+}].$
			\item If $\lambda^{+}$ and $\lambda^{-}$ have different signs, then $\theta_0 \in [\lambda^{+}/\gamma^{+},\lambda^{-}/\gamma^{+}]$. 
		\end{itemize}
		\item When $\gamma^{-}\neq 0$ and $\gamma^{+}\neq 0$ and they have different signs:
		\begin{itemize}
			\item If $(\lambda^{-},\lambda^{+})\in\mathbb{R}^{++}$, then $\theta_{0}\in (-\infty,\lambda^{-}/\gamma^{-}]\cup[\lambda^{-}/\gamma^{+},\infty).$
			\item If $(\lambda^{-},\lambda^{+})\in\mathbb{R}^{--}$, then $\theta_{0}\in(-\infty,\lambda^{+}/\gamma^{+}]\cup[\lambda^{+}/\gamma^{-},\infty)$. \item If $\lambda^{+}$ and $\lambda^{-}$ have different signs, then $\theta_{0}\in\left(-\infty,\infty\right)$. 
		\end{itemize}
	\end{enumerate}
\end{theorem}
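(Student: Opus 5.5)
The plan is to use $\phi_{\theta_0}=\lambda-\gamma\theta_0=0$, which holds because $\theta=\theta_0=\lambda/\gamma$. Read this way, the hypothesis says that $t=\theta_0$ belongs to
\[
\mathcal{T}:=\{t:\ \phi_t^{-}\le 0\le \phi_t^{+}\}.
\]
I will show that $\mathcal{T}$ coincides with the image of the rectangle $R:=[\lambda^{-},\lambda^{+}]\times[\gamma^{-},\gamma^{+}]$ under the map $(l,g)\mapsto l/g$, restricted to $g\neq 0$. The theorem then reduces to computing this image in each sign configuration.

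\textbf{Step 1: reduce to a problem on the rectangle.} Since (\ref{OVB_AR}) is only a rewriting of (\ref{OVB_phi}), I have $\phi_t^{-}=\min_{(l,g)\in R}(l-gt)$ and $\phi_t^{+}=\max_{(l,g)\in R}(l-gt)$. This is because $l-gt$ is affine in $(l,g)$, so its extremes over $R$ are attained at corners, and the sign of $t$ selects which corner via the indicator functions. The set $\{l-gt:(l,g)\in R\}$ is connected, so $\phi_t^{-}\le 0\le\phi_t^{+}$ holds if and only if there is some $(l,g)\in R$ with $l=gt$. When $\gamma^{-},\gamma^{+}$ are nonzero and share a sign, every $g$ in $[\gamma^{-},\gamma^{+}]$ is nonzero, so this condition is exactly $t\in\{l/g:(l,g)\in R\}$. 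Therefore $\theta_0$ lies in that image.

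\textbf{Step 2: cases 1 and 2, where $[\gamma^{-},\gamma^{+}]$ does not contain $0$.} The map $l/g$ is continuous on the connected set $R$, so the image is a closed interval whose endpoints are attained at corners. I identify them using monotonicity. For $g>0$, $l/g$ is increasing in $l$. In $g$ it is decreasing if $l>0$ and increasing if $l<0$. These facts give the three bullets:
\begin{itemize}
\item For $\lambda^{\pm}>0$ the image is $[\lambda^{-}/\gamma^{+},\lambda^{+}/\gamma^{-}]$.
\item For $\lambda^{\pm}<0$ it is $[\lambda^{-}/\gamma^{-},\lambda^{+}/\gamma^{+}]$.
\item For $\lambda^{-}<0<\lambda^{+}$ both extremes use the smallest positive $g=\gamma^{-}$, giving $[\lambda^{-}/\gamma^{-},\lambda^{+}/\gamma^{-}]$.
\end{itemize}
Case 2 is the same computation with $g<0$. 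Equivalently, apply case 1 to $(-\lambda,-\gamma)$, swapping the roles of $\gamma^{\pm}$.

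\textbf{Step 3: case 3, where $\gamma^{-}<0<\gamma^{+}$.} This is the main obstacle, because $g=0$ now lies in the rectangle. There, $l=gt$ forces $l=0$, and the equivalence in Step 1 has to be handled carefully.
\begin{itemize}
\item When $\lambda^{\pm}$ have different signs, $(0,0)\in R$ makes $l=gt$ solvable for every $t$. The identified set is then all of $\mathbb{R}$.
\item When $\lambda^{\pm}>0$, the point $g=0$ is infeasible because $l\ne 0$. I split $R$ into the two pieces $g\in[\gamma^{-},0)$ and $g\in(0,\gamma^{+}]$ and compute each image separately. On the positive piece, $l/g$ ranges over $[\lambda^{-}/\gamma^{+},\infty)$: the infimum occurs at the smallest $l$ and largest $g$, and the ratio is unbounded as $g\downarrow 0$. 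On the negative piece it ranges over $(-\infty,\lambda^{-}/\gamma^{-}]$.
\item The case $\lambda^{\pm}<0$ is symmetric, obtained by replacing $\lambda$ by $-\lambda$.
\end{itemize}
Taking unions gives the stated sets. I will also note that $\gamma\neq0$ is implicitly required for $\theta$ to be defined, which is consistent with discarding $g=0$ in these subcases.
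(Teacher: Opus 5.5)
Your proof is correct, but it follows a different route from the paper's.

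\textbf{The paper's route.} The paper never passes to the rectangle. It works directly with the two linear inequalities $\phi^{+}_{\theta_0}\ge 0$ and $\phi^{-}_{\theta_0}\le 0$. It first fixes the sign of $\theta_0$ from the signs of $\lambda\in[\lambda^-,\lambda^+]$ and $\gamma\in[\gamma^-,\gamma^+]$, which tells it which branch of the indicators is active. It then solves for $\theta_0$; for example, $\lambda^+-\gamma^-\theta_0\ge 0$ and $\lambda^--\gamma^+\theta_0\le 0$ give $[\lambda^-/\gamma^+,\lambda^+/\gamma^-]$. The mixed-sign $\lambda$ subcase is the union of the two same-sign intervals, and case 2 is omitted as analogous. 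Case 3 is argued informally: it lets $\gamma\to 0^{\pm}$ and treats ``$\gamma\gg 0$'' and ``$\gamma\ll 0$'' separately.

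\textbf{Your route.} You identify $\{t:\phi^-_t\le 0\le \phi^+_t\}$ with the image of $[\lambda^-,\lambda^+]\times[\gamma^-,\gamma^+]$ under $(l,g)\mapsto l/g$. This uses the reading of (\ref{OVB_AR}) as the min and max of an affine function over the rectangle, plus connectedness. You then read off the endpoints by monotonicity.

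\textbf{What each route buys.}
\begin{itemize}
\item Your argument needs only the hypothesis as stated, namely $\theta_0\in\{t:\phi^-_t\le 0\le\phi^+_t\}$. The paper additionally invokes $\lambda\in[\lambda^-,\lambda^+]$ and $\gamma\in[\gamma^-,\gamma^+]$ to sign $\theta_0$.
\item You show the stated sets equal this identified set, so they are sharp rather than merely outer bounds.
\item You make case 3 rigorous: the $g=0$ slice, the unboundedness as $g\to 0$, and the point $(0,0)$ that forces all of $\mathbb{R}$. The paper's limit arguments there are heuristic.
\item The paper's approach is shorter and purely algebraic, and that suffices for containment.
\end{itemize}

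\textbf{A small bookkeeping point in case 2.} The reflection $(\lambda,\gamma)\mapsto(-\lambda,-\gamma)$ swaps the roles of $\lambda^{\pm}$ as well as $\gamma^{\pm}$. It also interchanges the $\mathbb{R}^{++}$ and $\mathbb{R}^{--}$ subcases for $\lambda$. It is cleaner either to do the $g<0$ computation directly, or to reflect only $\gamma$ and then negate $t$.
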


To prove Theorem 1, note that if $\theta_{0}$ is the true value of $\theta$, $0\in [\phi_{\theta_{0}}^{-},\phi_{\theta_{0}}^{+}]$. This requires that both $\phi_{\theta_{0}}^{+}\geq0$ and $\phi_{\theta_{0}}^{-}\leq0$ hold. Therefore:
\[
\theta_{0}\in\{t\in \boldsymbol{\Theta}_0:\{\phi_{t}^{+}\geq0\}\cap\{\phi_{t}^{-}\leq0\}\}.
\]
In addition, when $(\gamma^{-},\gamma^{+})\in\mathbb{R}^{++}$, if $(\lambda^{-},\lambda^{+})\in\mathbb{R}^{++}$, $\theta_{0}>0$. If $(\lambda^{-},\lambda^{+})\in\mathbb{R}^{--}$,
$\theta_{0}<0$. If $\lambda^{-}$ and $\lambda^{+}$ have different signs, $\theta_{0}\in[-c_1,c_2]$ where $(c_1,c_2)>0$ are some constants. Similar arguments can be applied to derive the OVB bounds of $\theta_0$ when $(\gamma^{-},\gamma^{+})\in\mathbb{R}^{--}$.

Let $(\hat{\lambda}^{-},\hat{\lambda}^{+},\hat{\gamma}^{-},\hat{\gamma}^{+})$ denote some estimates of $(\lambda^{-},\lambda^{+},\gamma^{-},\gamma^{+})$. In practice, we can apply the result of Theorem 1 to estimate the upper and lower bounds for $\theta_0$ using $(\hat{\lambda}^{-},\hat{\lambda}^{+},\hat{\gamma}^{-},\hat{\gamma}^{+})$. If the signs of $(\hat{\gamma}^{-},\hat{\gamma}^{+})$ are the same, the OVB bounds can be directly estimated following points 1. and 2. of Theorem 1. However, if $(\hat{\gamma}^{-},\hat{\gamma}^{+})$ have different signs, the situation becomes more complicated. According to point 3. of Theorem 1, when $(\gamma^{-},\gamma^{+})\neq (0,0)$ and have different signs, the partially identified set for $\theta_0$ is either (a) split into disjoint segments of the real line (when $(\lambda^{-},\lambda^{+})$ have the same sign); or (b) the entire real line (when $(\lambda^{-},\lambda^{+})$ have different signs). In case (a), zero is not included in the OVB bound; in (b), it is. These results, particularly case (a), are hard to be interpreted. Therefore for practically applying Theorem 1, we recommend first checking whether $(\hat{\gamma}^{-},\hat{\gamma}^{+})\neq (0,0)$ and have the same sign: $(\hat{\gamma}^{-},\hat{\gamma}^{+})\in\mathbb{R}^{++}$ or $(\hat{\gamma}^{-},\hat{\gamma}^{+})\in\mathbb{R}^{--}$. If this condition fails, we suggest stopping here and reporting that the first-stage estimation fails when the OVB is concerned. 
If the condition holds, we proceed with results of points 1. and 2. of Theorem 1 to construct the OVB bound for $\theta_0$.

\subsection{Sensitivity Parameters in the OVB Analysis}
The sensitivity parameters $C_{\alpha}$, $C_{Y}$ and $C_D$ play crucial roles in the OVB analysis. In this section, we elaborate on their properties and show that serve as measures of the strength of the omitted variable. 

Let $R^{2}_{U_1\sim U_2}:=\text{Var}(U_2)/\text{Var}(U_1)$ denote the ratio of variances between two random variables $U_2$ and $U_1$. If $U_2=E[U_1|U_3]$ holds, 
\begin{equation}
	R^{2}_{U_1\sim U_2}=R^{2}_{U_1\sim E[U_1|U_3]}=\text{Var}(E[U_1|U_3])/\text{Var}(U_1):=\eta_{U_1\sim U_3}^{2},\label{R2_U1U2}
\end{equation} where $\eta_{U_1\sim U_3}^{2}$ denotes the non-parametric R-squared (Pearson's correlation ratio) between $U_1$ and $U_3$. 

In the cases of PLIVM, LATE and LATT, we have $E[\alpha]=E[\alpha_{s}]=0$, and therefore $\text{Var}(\alpha)=E[\alpha^{2}]$ and $\text{Var}(\alpha_{s})=E[\alpha_{s}^{2}]$. Then using the fact that $E[\alpha_{s}(\alpha-\alpha_{s})]=0$, we also can have
$E[(\alpha-\alpha_{s})^{2}]=E[\alpha^{2}]-E[\alpha_{s}^{2}]\geq0$ and express $C_{\alpha}^{2}$ as:
\begin{equation}
	C_{\alpha}^{2} = \frac{1-R^{2}_{\alpha\sim \alpha_s}}{R^{2}_{\alpha\sim \alpha_s}}.
	\label{C_alpha}
\end{equation}For LATE and LATT, since $E[\alpha|W_s] = \alpha_s$, we have 
$R_{\alpha\sim\alpha_{s}}^{2}=\eta_{\alpha\sim W_{s}}^{2}$, the
nonparametric $R^{2}$ between $\alpha$ and $W_{s}$. 
However, $R_{\alpha\sim\alpha_{s}}^{2}=\eta_{\alpha\sim W_{s}}^{2}$ does not hold for PLIVM, since $E[\alpha|W_s] \neq \alpha_s$ for this case. 

For LATE, using $\text{Var}(Z|X,A)=\pi(X,A)(1-\pi(X,A))$, we obtain $E[\alpha^{2}]=E\left[1/\text{Var}(Z|X,A)\right]$, which is the expected
precision of prediction on $Z$ using $(X,A)$. Similarly, $E[\alpha_{s}^{2}]=E\left[1/\text{Var}(Z|X)\right].$
Therefore 
\[
1-R^2_{\alpha\sim\alpha_s} =\frac{E\left[\frac{1}{\text{Var}(Z|X,A)}\right]-E\left[\frac{1}{\text{Var}(Z|X)}\right]}{E\left[\frac{1}{\text{Var}(Z|X,A)}\right]},
\]which quantifies the (absolute) decrease in expected prediction precision for $Z$ when $A$ is omitted in the model. Note that $1-R^2_{\alpha\sim\alpha_s}$ is bounded between 0 and 1. Using the result in (\ref{C_alpha}), we also have:
\begin{align*}
	C_{\alpha}^{2} & =\frac{E\left[\frac{1}{\text{Var}(Z|X,A)}\right]-E\left[\frac{1}{\text{Var}(Z|X)}\right]}{E\left[\frac{1}{\text{Var}(Z|X)}\right]},
\end{align*}
which represents the additional gain in expected prediction precision for $Z$ when $(X,A)$ are included into the model, relative to using $X$ alone. For PLIVM, $\alpha(W)=(Z-E[Z|A,X])/E[(Z-E[Z|A,X])^{2}]$
and $\alpha_{s}(W_{s})=(Z-E[Z|X])/E[(Z-E[Z|X])^{2}]$. Then it can be shown that $R_{\alpha\sim\alpha_{s}}^{2}=E[(Z-E[Z|A,X])^{2}]/E[(Z-E[Z|X])^{2}]$ and 
\begin{align*}
	C_{\alpha}^{2} & =\frac{E[(Z-E[Z|X])^{2}]-E[(Z-E[Z|A,X])^{2}]}{E[(Z-E[Z|X,A])^{2}]}\\
	& =\frac{\eta_{Z\sim A|X}^{2}}{1-\eta_{Z\sim A|X}^{2}},
\end{align*}
which captures an increase in the MSE for predicting $Z$
when $A$ is absent in the model, relative to that for predicting $Z$ when both $(Z,A)$ present. In the second equality, $\eta_{Z\sim A|X}^{2} = 1-R^2_{\alpha\sim\alpha_s}$ 
is the nonparametric partial $R^{2}$ between $Z$ with $A$, conditional
on $X$, which is defined as
\[
\eta_{Z\sim A|X}^{2}:=\frac{E[\text{Var}(Z|X)]-E[\text{Var}(Z|A,X)]}{E[\text{Var}(Z|X)]}=\frac{\eta_{Z\sim A,X}^{2}-\eta_{Z\sim X}^{2}}{1-\eta_{Z\sim X}^{2}}.
\]
$\eta_{Z\sim A|X}^{2}$ captures the extra explanatory power that $A$ provides for $Z$, beyond what is already explained by $X$, relative
to $1-\eta_{Z\sim X}^{2}$, the remaining unexplained variation of $Z$ after conditioning on $X$. 

For $C_{Y}^{2}$ (or $C_{D}^{2}$), with the notations in (\ref{R2_U1U2}), we can express $C_{Y}^{2}$ as:
\[C_{Y}^{2} = R^{2}_{Y-g_{Ys} \sim g_Y-g_{Ys}}
\]using the identities $E[g_{Y}g_{Y_{s}}]=E[Yg_{Y_{s}}]=E[g_{Y_{s}}^{2}]$. Furthermore:
\begin{eqnarray*}
	C_{Y}^{2} & = &\frac{E[(g_{Y}-g_{Ys})^{2}]}{E[(Y-g_{Ys})^{2}]}\\
	& = &\frac{E[\text{Var}(Y|W_{s})]-E[\text{Var}(Y|W)]}{E[\text{Var}(Y|W_{s})]}\\
	& = &\eta_{Y\sim A|Z,X}^{2},
\end{eqnarray*}
which is the nonparametric partial $R^{2}$ between $Y$ with $A$, conditional
on $(Z,X)$. The quantity $\eta_{Y\sim A|Z,X}^{2}$ reflects the additional explanatory power of $A$ beyond what $(Z,X)$ already provides,
relative to $1-\eta_{Y\sim Z,X}^{2}$, the unexplained variation in $Y$ given $(Z,X)$. Alternatively, $C_{Y}^{2}$ can also be interpreted as the proportional reduction in the MSE for predicting $Y$ when $A$ is additionally included into the model with $(Z,X)$, comparing to using $(Z,X)$ alone. 
Thus the sensitivity parameters $C_Y^{2}$, $C_D^{2}$ and $C_{\alpha}^{2}$ measure the gains in predictive accuracy for $Y$, $D$ and $Z$, respectively, when variable $A$ is included in the models given $X$.

To compute the OVB bounds in (\ref{OVB_bound_lambda}), (\ref{OVB_bound_gamma}) and (\ref{OVB_AR}), we need to assign values to the sensitivity parameters $C_Y$, $C_{\alpha}$ and $C_D$. As discussed above, these parameters capture how the  omitted variable $A$ affects the weight $\alpha$ and predictions for $Y$ and $D$, when only $(Z,X)$ are used. Therefore setting their values is equivalent to assessing the importance of the omitted variable $A$ in determining $\alpha$ and predicting $Y$ and $D$. Since the parameters of interest are constituted by the weight $\alpha$ and predictions on $Y$ and $D$, carefully selecting the values of the sensitivity parameters is crucial for quantifying the bias due to the omission of $A$. This can be done by leveraging the researcher’s domain knowledge, by estimating them from data through benchmarking analysis (see the discussion below), or by combining both approaches.

Finally, the sensitivity parameters $C_{\alpha}^{2}$, $C_Y^{2}$ and $C_D^{2}$ are all unit-free, as they are scaled by factors that eliminate dependence on measurement units. This scale-invariance ensures that the parameters are comparable across different variables and empirical contexts, regardless of their units of measurement. Since these parameters are derived from variance ratios (e.g., nonparametric R-squared), they quantify proportional improvements in predictive accuracy rather than absolute changes. This property facilitates meaningful interpretation, robust sensitivity analysis, and consistent calibration of parameter values—particularly when conducting simulations or assessing the importance of omitted variables whose scales may be unknown or heterogeneous. As a result, the unit-free nature of these sensitivity measures enhances both the generalizability and practical relevance of the OVB analysis.


\subsection{Benchmarking Analysis}
Following \citet{CH_2022} and \citet{CCNSS_2024}, we conduct a benchmarking analysis by imposing a requirement that the explanatory power gained from including the omitted variable $A$ should be comparable to that obtained from specific observable variables. The primary objective of this analysis is to establish reasonable bounds for restricting the maximum values of the sensitivity parameters $C_{\alpha}^{2}$, $C_Y^{2}$ and $C_D^{2}$. To achieve this, we first show that $C_{\alpha}^{2}$, $C_{Y}^{2}$ and $C_{D}^{2}$ in the OVB bounds can be expressed as functions of the relative strength of the omitted variable $A$ comparable to other observable variables. Let $X_{-j}$ denote the set of all other observable variables when $X_{j}$ is excluded from $X$. Let $W_{-j}=(Z,X_{-j},A)$, $W_{s,-j}=(Z,X_{-j})$ and: 
\begin{align*}
	\alpha_{s,-j}(W_{s,-j}) & :=\frac{Z}{\pi(X_{-j})}-\frac{(1-Z)}{1-\pi(X_{-j})},\\
	g_{Ys,-j}(W_{s,-j}) & :=E[Y|W_{s,-j}]=E[Y|Z,X_{-j}],\\
	g_{Ds,-j}(W_{s,-j}) & :=E[D|W_{s,-j}]=E[D|Z,X_{-j}].
\end{align*}
As before, we will use the abbreviations $\alpha_{s,-j}$, $g_{Ys,-j}$ and $g_{Ds,-j}$
to denote $\alpha_{s,-j}(W_{s,-j})$, $g_{Ys,-j}(W_{s,-j})$ and $g_{Ds,-j}(W_{s,-j})$. For
$\alpha_{s}$, define the gain in explanatory power from including $X_{j}$, given $X_{-j}$ as:
\[
1-R_{\alpha_{s}\sim\alpha_{s,-j}}^{2}=1-\frac{E[\alpha_{s,-j}^{2}]}{E[\alpha_{s}^{2}]}.
\]
The gain in explanatory power from including $A$, $1-R_{\alpha\sim\alpha_{s}}^{2}$ can then be expressed as:
\begin{eqnarray}
	1-R_{\alpha\sim\alpha_{s}}^{2} 
	&=&1-\frac{E[\alpha_{s}^{2}]}{E[\alpha_{s,-j}^{2}]}\frac{E[\alpha_{s,-j}^{2}]}{E[\alpha^{2}]}\nonumber \\
	&=&\frac{(1-R_{\alpha\sim\alpha_{s,-j}}^{2})-(1-R_{\alpha_{s}\sim\alpha_{s,-j}}^{2})}{R_{\alpha_{s}\sim\alpha_{s,-j}}^{2}}.\label{C2_alpha1}
\end{eqnarray}
Note that $1-R_{\alpha\sim\alpha_{s,-j}}^{2}$ measures the gain in explanatory power from including $(A,X_{j})$, given $X_{-j}$. The numerator in (\ref{C2_alpha1}) therefore captures the extra explanatory power from including $A$ beyond that provided by $X_{j}$, given $X_{-j}$. Define the relative strength of $A$ to $X_{j}$ for $\alpha$ as: 
\begin{equation}
	k_{\alpha}=\frac{R_{\alpha_{s}\sim\alpha_{s,-j}}^{2}-R_{\alpha\sim\alpha_{s,-j}}^{2}}{1-R_{\alpha_{s}\sim\alpha_{s,-j}}^{2}},\label{k_alpha}
\end{equation}
which is a ratio of the extra gain in explanatory power from including $(X_j,A)$, compared to that from including $X_{j}$ only, given $X_{-j}$. Therefore the quantity $k_{\alpha}$ measures relative importance of $A$ compared to $X_{j}$ in explaining $\alpha$, given $X_{-j}$. A value $k_{\alpha}\leq1$ indicates that $A$ is relatively less important than $X_{j}$ for explaining $\alpha$, given $X_{-j}$. Furthermore, it can be shown that 
\[
1-R_{\alpha\sim\alpha_{s}}^{2}=k_{\alpha}G_{\alpha},
\]
where 
\begin{equation*}
	G_{\alpha}=\frac{1-R_{\alpha_{s}\sim\alpha_{s,-j}}^{2}}{R_{\alpha_{s}\sim\alpha_{s,-j}}^{2}}
	\label{G_alpha}
\end{equation*}
is a quantity that can be estimated. This leads directly to:
\begin{equation*}
	C_{\alpha}^{2}=\frac{1-R_{\alpha\sim\alpha_{s}}^{2}}{R_{\alpha\sim\alpha_{s}}^{2}}=\frac{k_{\alpha}G_{\alpha}}{1-k_{\alpha}G_{\alpha}}.
\end{equation*}

For $C_{Y}^{2}$, at first note that\footnote{The same result and derivation are applied to $C_{D}^{2}$.}
\begin{eqnarray}
	C_{Y}^{2} 
	&=&\frac{E[Y^{2}]-E[g_{Ys}^{2}]-(E[Y^{2}]-E[g_{Y}^{2}])}{E[(Y-g_{Ys})^{2}]}\nonumber \\
	&=&\frac{E[(Y-g_{Ys})^{2}]-E[(Y-g_{Y})^{2}]}{E[(Y-g_{Ys})^{2}]}\nonumber \\
	& = &1-R_{Y-g_{Ys}\sim Y-g_{Y}}^{2}.\label{C2_Y1}
\end{eqnarray}by using the result $E(Y-g_Y)^{2}=E[Y^{2}]-E[g_{Y}^{2}]$.
Equation (\ref{C2_Y1}) captures the relative reduction in the mean squared error (MSE) in predicting $Y$ when including the omitted variable $A$. Following similar argument for deriving expression of $1-R_{\alpha\sim\alpha_{s}}^{2}$, we also have:
\begin{eqnarray}
	1-R_{Y-g_{Ys}\sim Y-g_{Y}}^{2} 
	&=&1-\frac{E[(Y-g_{Ys,-j})^{2}]}{E[(Y-g_{Y})^{2}]}\frac{E[(Y-g_{Ys})^{2}]}{E[(Y-g_{Ys,-j})^{2}]}\nonumber\\
	& =&\frac{(1-R_{Y-g_{Ys,-j}\sim Y-g_{Y}}^{2})-(1-R_{Y-g_{Ys,-j}\sim Y-g_{Ys}}^{2})}{R_{Y-g_{Ys,-j}\sim Y-g_{Ys}}^{2}}.\label{C2_Y2}
\end{eqnarray}
The numerator in (\ref{C2_Y2}) represents the additional reduction in MSE in predicting $Y$ from including $(X_j,A)$, compared to that from including $X_{j}$ only, given $X_{-j}$. Define the relative strength of $A$ to $X_{j}$ for predicting $Y$ as
\begin{equation*}
	k_{Y}=\frac{R_{Y-g_{Ys,-j}\sim Y-g_{Ys}}^{2}-R_{Y-g_{Ys,-j}\sim Y-g_{Y}}^{2}}{1-R_{Y-g_{Ys,-j}\sim Y-g_{Ys}}^{2}}.
	\label{k_Y}
\end{equation*}
This yields
\begin{equation*}
	C_Y^{2} =1-R_{Y-g_{Ys}\sim Y-g_Y}^{2} =k_{Y}G_{Y},
\end{equation*}
where 
\begin{equation*}
	G_{Y}=\frac{1-R_{Y-g_{Ys,-j}\sim Y-g_{Ys}}^{2}}{R_{Y-g_{Ys,-j}\sim Y-g_{Ys}}^{2}}.\label{G_Y}
\end{equation*}
is a quantity that can be estimated. 

Estimating $G_{\alpha}$ and $G_Y$ involves with estimating the variance ratios $R_{\alpha_s\sim\alpha_{s,-j}}^{2}$ and $R_{Y-g_{Ys,-j}\sim Y-g_{Ys}}^{2}$, which can be directly computed from available data. 
In addition, for estimating $\alpha_{s,-j}$ and $g_{Ys,-j}$, there is no restriction on the number of excluded variables $X_j$. That is, we may exclude a group of variables simultaneously if it is necessary.\footnote{For example, variables such as age are often represented categorically. In practice, we may exclude all age-related variables when estimating $\alpha_{s,-j}$ and $g_{Ys,-j}$. This approach is equivalent to treating $X_j$ as a vector that includes these age variables.} 
This flexibility facilitates a richer analysis on robustness of parameter estimations to the omitted variable bias.

\section{Estimation and Inference for the OVB Bound}

To estimate the OVB bound, we need to estimate $(\lambda_s, \gamma_s)$, the short version of $(\lambda,\gamma)$, as well as $(v^{2}_s,\sigma^{2}_{Y_s},\sigma^{2}_{D_s})$, along with the calibrated values of the sensitivity parameters $C_{\alpha}$, $C_Y$ and $C_D$ and correlation coefficients $|\rho_{Y}|$ and $|\rho_D|$. 
In our empirical application, we employ the double machine learning (DML) estimators combined with the median method \citep{CCDDHNR_2018} to estimate  $(\lambda_s, \gamma_s, v^{2}_s,\sigma^{2}_{Y_s},\sigma^{2}_{D_s})$. The DML estimator integrates an estimator satisfying the Neyman orthogonality with K-fold cross fitting. We begin by introducing the former, which can be derived using the influence functions (IFs). 

We first consider estimating the OVB bound for LATE. In this case, the IFs of $\lambda_{s}$ and $\gamma_s$ are given by: 
\begin{equation}
	\psi_{\lambda_{s}}(Y,W_s) =\bar{\psi}(Y,W_s)-\lambda_{s},\text{ }\psi_{\gamma_s}(D,W_s) =\bar{\psi}(D,W_s)-\gamma_s.\label{IF_LATE_s}
\end{equation}where
\begin{eqnarray*}
	\bar{\psi}(Y,W_s) &=&  \frac{Z}{\pi_{s}(X)}(Y-E[Y|Z=1,X])-\frac{1-Z}{1-\pi_{s}(X)}(Y-E[Y|Z=0,X])+\nonumber\\
	&&E[Y|Z=1,X]-E[Y|Z=0,X],\\
	\bar{\psi}(D,W_s) &=& \frac{Z}{\pi_{s}(X)}(D-E[D|Z=1,X])-\frac{1-Z}{1-\pi_{s}(X)}(D-E[D|Z=0,X])+\nonumber\\
	&&E[D|Z=1,X]-E[D|Z=0,X].
\end{eqnarray*}By using the moment conditions $E[\psi_{\lambda_{s}}(Y,W_s)]=E[\psi_{\gamma_{s}}(D,W_s)]=0$, we identify: 
\begin{equation}
	\lambda_{s} =E[\bar{\psi}(Y,W_s)],\text{ }\gamma_{s} =E[\bar{\psi}(D,W_s)].\label{LATE_DML}
\end{equation}
It can be shown that the estimators based on the IFs (\ref{IF_LATE_s}) satisfy Neyman orthogonality. Given that $\gamma_s\neq0$, the short version of $\theta$ is $\theta_{s}=\lambda_{s}/\gamma_{s}$,
which can also be identified by solving the moment condition $E[\psi_{\theta_{s}}(Y,D,W_s)]=0$, where 
\[
\psi_{\theta_{s}}(Y,D,W_s)= \bar{\psi}(Y,W_s)-\bar{\psi}(D,W_s)\theta_s
\]

For the case of LATT, the IFs for $\lambda_{s}$ and $\gamma_s$ are given by \citep{Hahn_1998}:
\begin{equation}
	\psi_{\lambda_{s}}(Y,W_s)= \tilde{\psi}(Y,W_s)-\frac{Z\lambda_{s}}{P_{Z}},\text{ }
	\psi_{\gamma_{s}}(D,W_s)= \tilde{\psi}(D,W_s)-\frac{Z\gamma_{s}}{P_{Z}}\label{LATT_IF_s}
\end{equation}where
\begin{eqnarray*}
	\tilde{\psi}(Y,W_s)&= & \frac{Z}{P_{Z}}(Y-E[Y|Z=1,X])-\frac{1-Z}{P_{Z}}\frac{\pi_s(X)}{1-\pi_s(X)}(Y-E[Y|Z=0,X])+\\
	& &\frac{Z}{P_{Z}}(E[Y|Z=1,X]-E[Y|Z=0,X]),\\
	\tilde{\psi}(D,W_s)& = & \frac{Z}{P_{Z}}(D-E[D|Z=1,X])-\frac{1-Z}{P_{Z}}\frac{\pi_s(X)}{1-\pi_s(X)}(D-E[D|Z=0,X])+\\
	& &\frac{Z}{P_{Z}}(E[D|Z=1,X]-E[D|Z=0,X]).
\end{eqnarray*}Again, by setting $E[\psi_{\lambda_{s}}(Y,W_s)]=E[\psi_{\gamma_{s}}(D,W_s)]=0$, $\lambda_{s}$ and $\gamma_{s}$ can be identified as:
\begin{equation}
	\lambda_{s} =E[\tilde{\psi}(Y,W_s)],\text{ }
	\gamma_{s} =E[\tilde{\psi}(D,W_s)].\label{LATT_DML}
\end{equation}
The estimators also 
satisfy Neyman orthogonality. Given that $\gamma_s\neq0$, the short version of $\theta$ is $\theta_{s}=\lambda_{s}/\gamma_{s}$,
which can also be solved by using the moment condition $E[\tilde{\psi}_{\theta_{s}}]=0$,
where 
\[
\psi_{\theta_{s}}(Y,D,W_s)= \tilde{\psi}(Y,W_s)-\tilde{\psi}(D,W_s)\theta_s.
\]

For PLIVM in (\ref{Y_plm}) and (\ref{D_plm}), we use the following IFs of $\lambda_{s}$ and $\gamma_s$ (Robinson style score functions) to estimate $\lambda_{s}$ and $\gamma_{s}$:
\begin{eqnarray}
	\psi_{\lambda_{s}}(Y,W_s) 
	& = &(Y-m(X))(Z-l(X))-\lambda_{s}(Z-l(X))^{2}.\label{IF_PLIVM_lambda_s}\\
	\psi_{\gamma_{s}}(D,W_s) 
	& = &(D-r(X))(Z-l(X))-\gamma_{s}(Z-l(X))^{2}\label{IF_PLIVM_gamma_s},
\end{eqnarray}
where $m(X):=E[Y|X]$, $r(X):=E[D|X]$ and $l(X):=E[Z|X]$. Setting $E[\psi_{\lambda_{s}}(Y,W_s)]=E[\psi_{\gamma_{s}}(D,W_s)]=0$ yields: 
\begin{equation}
	\lambda_{s} = \frac{E[(Y-m(X))(Z-l(X))]}{E[(Z-l(X))^{2}]},\text{ }
	\gamma_{s} = \frac{E[(D-r(X))(Z-l(X))]}{E[(Z-l(X))^{2}]}.\label{PLIVM_DML}
\end{equation}
Given that $\gamma_s\neq 0$, the short version of $\theta$ is $\theta_{s}=\lambda_{s}/\gamma_{s}$,
which can also be obtained with the moment condition $E[\psi_{\theta_{s}}(Y,D,W_s)]=0$,
where 
\[
\psi_{\theta_{s}}(Y,D,W_s)=[Y-m(X)-\theta_{s}(D-r(X))][Z-l(X)]
\]
is a Robinson style score function for $\theta_{s}$ with $Z$ as the instrumental variable. 

The sample analogues of (\ref{LATE_DML}), (\ref{LATT_DML}) and (\ref{PLIVM_DML}) are used as estimators for estimating $\lambda_s$ and $\gamma_s$ in conjunction with K-fold cross-fitting (see Section 2.5.2). Inside these estimators, nuisance parameters, such as $\pi_{s}(X)$, $E[Y|Z=1,D]$ and $E[D|Z=1,X]$ etc., can be estimated using appropriate parametric or nonparametric models, potentially enhanced with various machine learning methods (e.g., random forest or the lasso), especially when the dimension of $X$ is large and/or their functional forms are complex. In our empirical application, we use \textit{random forest}\footnote{The random forest is conducted with function \texttt{ranger} in \texttt{R} package \texttt{ranger}.} with K-fold cross-fitting to estimate these nuisance parameters. The estimate of the short version $\theta_s$ is crucial for the purposes of comparison and statistical inference in the empirical analysis. However, for estimating the OVB bounds for $\theta$ shown in Theorem 1, we only need estimates of $(\lambda_s, \gamma_s,v^{2}_s,\sigma^{2}_{Y_s},\sigma^{2}_{D_s})$, and estimating $\theta_s$ is not required.

\subsection{Confidence Interval for the OVB Bound}
We now turn to construction of confidence interval (C.I) for the OVB bound. The C.I. may serve as a measure to determine whether the statistical significance of the initial estimate persists after accounting for OVB. 
Let $\zeta_{Y,\alpha}=|\rho_{Y}|C_{Y}C_{\alpha}$ and $\zeta_{D,\alpha}=|\rho_{D}|C_{D}C_{\alpha}$. The influence functions (IFs) of $\lambda^{+}$ and $\lambda^{-}$ are given by \citep{CCNSS_2024}:
\begin{equation*}
	\psi_{\lambda^{+}}=\psi_{\lambda_{s}}+\zeta_{Y,\alpha}\psi_{S_Y},\text{ }
	\psi_{\lambda^{-}}=\psi_{\lambda_{s}}-\zeta_{Y,\alpha}\psi_{S_Y},
\end{equation*}where
\[
\psi_{S_{Y}}=\frac{\sigma_{Y_{s}}^{2}\psi_{v_{s}^{2}}+v_{s}^{2}\psi_{\sigma_{Y_{s}}^{2}}}{2S_{Y}},
\] is the IF of $S_Y$ and 
\[\psi_{\sigma_{Ys}^{2}}= (Y-g_{Ys})^{2}-\sigma_{Ys}^{2}, \text{ } \psi_{v_{s}^{2}}=\alpha_{s}^{2}-v_{s}^{2}\] 
are IFs of $\sigma_{Ys}^{2}=E[(Y-g_{Ys})^{2}]$ and $v_{s}^{2}=E[\alpha_{s}^{2}]$. If the DML estimators for estimating $\lambda^{+}$
and $\lambda^{-}$, denoted by $\hat{\lambda}^{+}$
and $\hat{\lambda}^{-}$, satisfy certain regularity conditions \citep{CCDDHNR_2018}, then $\hat{\lambda}^{+}$ and $\hat{\lambda}^{-}$ exhibit asymptotic normality: 
\[
\sqrt{n}(\hat{\lambda}^{+}-\lambda^{+}) \overset{a.}{\rightarrow} N\left(0,E\left[\psi_{\lambda^{+}}^{2}\right]\right),
\sqrt{n}(\hat{\lambda}^{-}-\lambda^{-}) \overset{a.}{\rightarrow} N\left(0,E\left[\psi_{\lambda^{-}}^{2}\right]\right).
\]Furthermore, the following one-sided covering properties hold:
\[
\lim_{n\rightarrow\infty}P(\lambda^{+}\leq\hat{\lambda}^{+}_{1-\tau})\geq 1-\tau, \lim_{n\rightarrow\infty}P(\lambda^{-}\geq \hat{\lambda}^{-}_{\tau})\geq 1-\tau,
\]where 
\begin{equation}
	\hat{\lambda}^{+}_{1-\tau} := \hat{\lambda}^{+}+\text{se}(\hat{\lambda}^{+})\Phi^{-1}(1-{\tau}), \hat{\lambda}^{-}_{\tau} := \hat{\lambda}^{-}-\text{se}(\hat{\lambda}^{-})\Phi^{-1}(1-{\tau}),\label{lambda_ovb_CI}
\end{equation}and $\text{se}(\hat{\lambda}^{-}):=\sqrt{\widehat{\text{Var}}(\hat{\lambda}^{-})/n}$ and $\text{se}(\hat{\lambda}^{+}):=\sqrt{\widehat{\text{Var}}(\hat{\lambda}^{+})/n}$ are the standard errors of $\hat{\lambda}^{+}$ and $\hat{\lambda}^{-}$, and $\Phi^{-1}(1-\tau)$ denotes the $(1-\tau)$-th quantile of the standard normal distribution.\footnote{In the following, we assume that $\tau\leq 0.5$.}

For $(\gamma^{+}, \gamma^{-})$, we also have similar results. The IFs of $\gamma^{+}$ and $\gamma^{-}$ are given by: 
\[
\psi_{\gamma^{+}}  =\psi_{\gamma_{s}}+\zeta_{D,\alpha}\psi_{S_D},\text{ }
\psi_{\gamma^{-}}  =\psi_{\gamma_{s}}-\zeta_{D,\alpha}\psi_{S_D},
\]where 
\[
\psi_{S_{D}} =\frac{\sigma_{D_{s}}^{2}\psi_{v_{s}^{2}}+v_{s}^{2}\psi_{\sigma_{D_{s}}^{2}}}{2S_{D}},
\]is the IF of $S_D$ and \[\psi_{\sigma_{Ds}^{2}}=(D-g_{Ds})^{2}-\sigma_{Ds}^{2}\]
is the IF of $\sigma_{Ds}^{2}=E[(D-g_{Ds})^{2}]$. Again, if the DML estimators for estimating $\gamma^{+}$ and $\gamma^{-}$, denoted by $\hat{\gamma}^{+}$ and $\hat{\gamma}^{-}$, satisfy certain regularity conditions, the asymptotic normality of $\hat{\gamma}^{+}$ and $\hat{\gamma}^{-}$ holds:
\[
\sqrt{n}(\hat{\gamma}^{+}-\gamma^{+}) \overset{a.}{\rightarrow} N\left(0,E\left[\psi_{\gamma^{+}}^{2}\right]\right),
\sqrt{n}(\hat{\gamma}^{-}-\gamma^{-}) \overset{a.}{\rightarrow} N\left(0,E\left[\psi_{\gamma^{-}}^{2}\right]\right).
\]Accordingly the following one-sided covering properties also hold: 
\[
\lim_{n\rightarrow\infty}P(\gamma^{+}\leq\hat{\gamma}^{+}_{1-\tau})\geq 1-\tau, \lim_{n\rightarrow\infty}P(\gamma^{-}\geq \hat{\gamma}^{-}_{\tau})\geq 1-\tau,
\]
where 
\begin{equation}
	\hat{\gamma}^{+}_{1-\tau}:= \hat{\gamma}^{+}+\text{se}(\hat{\gamma}^{+})\Phi^{-1}(1-{\tau}),
	\hat{\gamma}^{-}_{\tau}:=\hat{\gamma}^{-}-\text{se}(\hat{\gamma}^{-})\Phi^{-1}(1-{\tau}),\label{gamma_ovb_CI}
\end{equation}and $\text{se}(\hat{\gamma}^{-}):=\sqrt{\widehat{\text{Var}}(\hat{\gamma}^{-})/n}$ and $\text{se}(\hat{\gamma}^{+}):=\sqrt{\widehat{\text{Var}}(\hat{\gamma}^{+})/n}$ are the standard errors of $\hat{\gamma}^{+}$ and $\hat{\gamma}^{-}$.

We now show the asymptotic results of $\hat{\phi}_{t}^{+}$ and $\hat{\phi}_{t}^{-}$, the plug-in estimators constructed using $(\hat{\lambda}^{+},\hat{\gamma}^{+},\hat{\gamma}^{-})$ and $(\hat{\lambda}^{-},\hat{\gamma}^{+},\hat{\gamma}^{-})$ for estimating $\phi_{t}^{+}$ and $\phi_{t}^{-}$ in (\ref{OVB_AR}), under the assumptions 
that the parameters $(t,\rho_{Y},\rho_{D},C_{\alpha},C_{Y},C_{\alpha})$
are all fixed and some regularity conditions for the DML estimators hold. The derivations of the IFs of $\phi_{t}^{+}$ and $\phi_{t}^{-}$ and approximate variances of $\hat{\phi}_{t}^{+}$ and $\hat{\phi}_{t}^{-}$ relies using on the fact that $\phi_{t}^{+}$ and $\phi_{t}^{-}$ are linear functions of $(\lambda^{+}, \gamma^{+}, \gamma^{-})$ and $(\lambda^{-}, \gamma^{+}, \gamma^{-})$. 
\begin{theorem}
	Assume $(t,\rho_{Y},\rho_{D},C_{\alpha},C_{Y},C_{D})$ are all
	fixed. The influence functions of $\phi_{t}^{+}$ and $\phi_{t}^{-}$
	are given by:
	\[
	\psi_{\phi_{t}^{+}}=\mathbf{C}_{t}^{+\top}\boldsymbol{\psi},\psi_{\phi_{t}^{-}}=\mathbf{C}_{t}^{-\top}\boldsymbol{\psi},
	\]
	where 
	\begin{eqnarray*}
		\mathbf{C}_{t}^{+} & = & \left[1,-t,\left(\frac{\zeta_{Y,\alpha}\sigma_{Y_{s}}}{2v_{s}}+\frac{\zeta_{D,\alpha}\sigma_{D_{s}}|t|}{2v_{s}}\right),\frac{\zeta_{Y,\alpha}v_{s}}{2\sigma_{Y_{s}}},\frac{\zeta_{D,\alpha}v_{s}|t|}{2\sigma_{D_{s}}}\right]^{\top},\\
		\mathbf{C}_{t}^{-} & = & \left[1,-t,-\left(\frac{\zeta_{Y,\alpha}\sigma_{Y_{s}}}{2v_{s}}+\frac{\zeta_{D,\alpha}\sigma_{D_{s}}|t|}{2v_{s}}\right),-\frac{\zeta_{Y,\alpha}v_{s}}{2\sigma_{Y_{s}}},-\frac{\zeta_{D,\alpha}v_{s}|t|}{2\sigma_{D_{s}}}\right]^{\top},\\
		\boldsymbol{\psi} & = & \left[\psi_{\lambda s},\psi_{\gamma_{s}},\psi_{v_{s}^{2}},\psi_{\sigma_{Y_{s}}^{2}},\psi_{\sigma_{D_{s}}^{2}}\right]^{\top}.
	\end{eqnarray*} Suppose $(\lambda_{s},\gamma_{s},v_{s}^{2},\sigma_{Y_{s}}^{2},\sigma_{D_{s}}^{2})$
	are all estimated with the DML estimators. If Assumption 4.2 (for PLIVM) or 5.2 (for LATE) in \citet{CCDDHNR_2018}
	holds, then 
	\[
	\sqrt{n}(\hat{\phi}_{t}^{+}-\phi_{t}^{+})\overset{a.}{\rightarrow}N\left(0,\mathbf{C}_{t}^{+\top}\boldsymbol{\Omega}\mathbf{C}_{t}^{+}\right),\sqrt{n}(\hat{\phi}_{t}^{-}-\phi_{t}^{-})\overset{a.}{\rightarrow}N\left(0,\mathbf{C}_{t}^{-\top}\boldsymbol{\Omega}\mathbf{C}_{t}^{-}\right),
	\]
	where $\boldsymbol{\Omega}=\mathbf{J}_{0}^{-1}E[\boldsymbol{\psi}\boldsymbol{\psi}^{\top}]\mathbf{J}_{0}^{-1}$is
	the approximate covariance matrix of these DML estimators and $\mathbf{J}_{0}$
	is the Jacobian matrix.
\end{theorem}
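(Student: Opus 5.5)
The plan is to write $\phi_t^{+}$ and $\phi_t^{-}$ as explicit smooth functions of the five short-model parameters $\boldsymbol{\beta}:=(\lambda_s,\gamma_s,v_s^{2},\sigma_{Y_s}^{2},\sigma_{D_s}^{2})$, get joint asymptotic normality of their DML estimators, and then apply the delta method.

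\textbf{Step 1 (the map).} From (\ref{OVB_AR}) and the definitions of $\lambda^{\pm},\gamma^{\pm}$, with $|\rho_Y|B_Y=\zeta_{Y,\alpha}S_Y$ and $|\rho_D|B_D=\zeta_{D,\alpha}S_D$, I will check both sign cases of $t$.
\begin{itemize}
\item For $t\ge 0$: $\phi_t^{+}=\lambda_s+\zeta_{Y,\alpha}S_Y-t(\gamma_s-\zeta_{D,\alpha}S_D)$.
\item For $t<0$: $\phi_t^{+}=\lambda_s+\zeta_{Y,\alpha}S_Y-t(\gamma_s+\zeta_{D,\alpha}S_D)$.
\end{itemize}
Both cases collapse to
\[
\phi_t^{+}=\lambda_s-t\gamma_s+\bigl(\zeta_{Y,\alpha}\sigma_{Y_s}+|t|\zeta_{D,\alpha}\sigma_{D_s}\bigr)v_s ,
\]
using $S_Y=\sigma_{Y_s}v_s$ and $S_D=\sigma_{D_s}v_s$. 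By the same argument,
\[
\phi_t^{-}=\lambda_s-t\gamma_s-\bigl(\zeta_{Y,\alpha}\sigma_{Y_s}+|t|\zeta_{D,\alpha}\sigma_{D_s}\bigr)v_s .
\]
Since $t$ and the sensitivity parameters are fixed, $|t|$ is a constant. Each map $f^{\pm}_t(\boldsymbol{\beta})$ is therefore continuously differentiable wherever $v_s^2,\sigma_{Y_s}^2,\sigma_{D_s}^2>0$, which I will assume as a nondegeneracy condition.

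\textbf{Step 2 (the gradient).} Differentiate with respect to the variance parameters themselves, using $\partial\sqrt{x}/\partial x=1/(2\sqrt{x})$.
\begin{itemize}
\item $\partial f_t^{+}/\partial v_s^2=(\zeta_{Y,\alpha}\sigma_{Y_s}+|t|\zeta_{D,\alpha}\sigma_{D_s})/(2v_s)$.
\item $\partial f_t^{+}/\partial\sigma_{Y_s}^2=\zeta_{Y,\alpha}v_s/(2\sigma_{Y_s})$.
\item $\partial f_t^{+}/\partial\sigma_{D_s}^2=|t|\zeta_{D,\alpha}v_s/(2\sigma_{D_s})$.
\end{itemize}
Together with the coefficients $1$ and $-t$ on $\lambda_s$ and $\gamma_s$, this is exactly $\mathbf{C}_t^{+}$; flipping the last three signs gives $\mathbf{C}_t^{-}$.

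The influence function of a smooth functional of $\boldsymbol\beta$ is the gradient times the influence functions of the components. This gives $\psi_{\phi_t^{\pm}}=\mathbf{C}_t^{\pm\top}\boldsymbol\psi$. It is consistent with the expressions $\psi_{\lambda^{\pm}}=\psi_{\lambda_s}\pm\zeta_{Y,\alpha}\psi_{S_Y}$ and $\psi_{\gamma^{\pm}}=\psi_{\gamma_s}\pm\zeta_{D,\alpha}\psi_{S_D}$ stated above, with $\psi_{S_Y}$ expanded as $(\sigma_{Y_s}^2\psi_{v_s^2}+v_s^2\psi_{\sigma_{Y_s}^2})/(2S_Y)$.

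\textbf{Step 3 (joint normality and delta method).} Stack the five DML moment conditions:
\begin{itemize}
\item the orthogonal scores (\ref{IF_LATE_s}) or (\ref{LATT_IF_s}) or (\ref{IF_PLIVM_lambda_s})--(\ref{IF_PLIVM_gamma_s}) for $\lambda_s$ and $\gamma_s$;
\item the scores $\alpha_s^2-v_s^2$, $(Y-g_{Ys})^2-\sigma_{Y_s}^2$ and $(D-g_{Ds})^2-\sigma_{D_s}^2$ for the variance parameters.
\end{itemize}
Under Assumption 4.2 or 5.2 of \citet{CCDDHNR_2018}, applied to this vector of linear scores with cross-fitting, I will invoke their asymptotic linearity result componentwise. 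This yields $\sqrt n(\hat{\boldsymbol\beta}-\boldsymbol\beta)=\mathbf J_0^{-1}n^{-1/2}\sum\boldsymbol\psi_i+o_p(1)$, hence $\sqrt n(\hat{\boldsymbol\beta}-\boldsymbol\beta)\to N(0,\boldsymbol\Omega)$ jointly, by a multivariate CLT (Cram\'er--Wold). The plug-in estimators are $\hat\phi_t^{\pm}=f_t^{\pm}(\hat{\boldsymbol\beta})$, so the delta method gives the stated limits with variance $\mathbf C_t^{\pm\top}\boldsymbol\Omega\mathbf C_t^{\pm}$.

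\textbf{Main obstacle.} The delicate step is the joint asymptotic linearity in Step 3 for the variance parameters. The naive score $\alpha_s^2-v_s^2$ is not Neyman orthogonal with respect to the nuisance $\pi_s$, and similarly the residual-variance scores depend on $g_{Ys}$ and $g_{Ds}$. My first attempt would be to show that the first-order effect of nuisance error vanishes for these scores:
\begin{itemize}
\item for $\sigma^2$, by the conditional-mean property $E[(Y-g_{Ys})\,h(W_s)]=0$;
\item for $v_s^2$, by replacing $\alpha_s^2$ with its debiased Riesz-representer form $2m(W_s;\alpha_s)-\alpha_s^2$, as in \citet{CCNSS_2024}.
\end{itemize}
Combined with the $o(n^{-1/4})$ nuisance rates in those assumptions, the second-order remainders are then $o_p(n^{-1/2})$. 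Once this is done, Steps 1, 2 and the delta method are routine.
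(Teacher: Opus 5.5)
Your proposal is correct and is essentially the paper's argument. The paper writes $\phi_t^{\pm}$ as linear in $(\lambda^{\pm},\gamma^{+},\gamma^{-})$, substitutes $\psi_{\lambda^{\pm}}=\psi_{\lambda_s}\pm\zeta_{Y,\alpha}\psi_{S_Y}$ and $\psi_{\gamma^{\pm}}=\psi_{\gamma_s}\pm\zeta_{D,\alpha}\psi_{S_D}$, collapses the indicators into $|t|$, and simply asserts joint DML normality of $(\hat\lambda_s,\hat\gamma_s,\hat v_s^{2},\hat\sigma_{Y_s}^{2},\hat\sigma_{D_s}^{2})$ — your gradient-plus-delta-method computation in different bookkeeping. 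Your explicit delta method is the cleaner justification of $\mathbf{C}_t^{\pm\top}\boldsymbol{\Omega}\mathbf{C}_t^{\pm}$, since the paper's identity $\phi_t^{\pm}=\mathbf{C}_t^{\pm\top}\Xi$ holds only by degree-one homogeneity, with $\mathbf{C}_t^{\pm}$ itself depending on $\Xi$.

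One thing to state explicitly: your orthogonality fix replaces the paper's $\psi_{v_s^{2}}=\alpha_s^{2}-v_s^{2}$ by $2m(W_s;\alpha_s)-\alpha_s^{2}-v_s^{2}$. The gradient $\mathbf{C}_t^{\pm}$ is unchanged, but the resulting $\boldsymbol{\Omega}$ generally differs from the one built from the paper's $\boldsymbol{\psi}$, for instance in its covariances with $\psi_{\sigma_{Y_s}^{2}}$ and $\psi_{\sigma_{D_s}^{2}}$. The paper's proof does not address this point.
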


In the case of PLIVM, $\mathbf{J}_0$ is a $5\times5$ diagonal matrix with diagonal elements: \[\left(-E\left[(Z-l(X))^{2}\right],-E\left[(Z-l(X))^{2}\right],-1,-1,-1\right).\]
For LATE and LATT, $\mathbf{J}_{0}$ is a negative identity matrix,
and the approximate variances of $\sqrt{n}(\hat{\phi}_{t}^{+}-\phi_{t}^{+})$
and $\sqrt{n}(\hat{\phi}_{t}^{-}-\phi_{t}^{-})$ can be simplified to $E[\psi_{\phi_{t}^{+}}^{2}]$
and $E[\psi_{\phi_{t}^{-}}^{2}]$. 

We next show that the $(1-\tau)$ OVB-adjusted C.I. for $\theta$ can be constructed using the results in Theorem 2. 
\begin{theorem}
	Let $\hat{\phi}_{t,1-\tau}^{+}$ denote the upper bound of $(1-\tau)$ C.I. of $\phi_{t}^{+}$, and $\hat{\phi}_{t,\tau}^{-}$ denote the lower bound of $(1-\tau)$ C.I. of $\phi_{t}^{-}$, i.e.,
	\[
	\hat{\phi}_{t, 1-\tau}^{+}  :=\hat{\phi}_{t}^{+}+\text{se}(\hat{\phi}_{t}^{+})\Phi^{-1}(1-\tau), \hat{\phi}_{t,\tau}^{-}  :=\hat{\phi}_{t}^{-}-\text{se}(\hat{\phi}_{t}^{-})\Phi^{-1}(1-\tau),
	\]where $\text{se}(\hat{\phi}_{t}^{+}):=\sqrt{\widehat{\text{Var}}(\hat{\phi}_{t}^{+})/n}$ and $\text{se}(\hat{\phi}_{t}^{-}):=\sqrt{\widehat{\text{Var}}(\hat{\phi}_{t}^{-})/n}$ denote the standard errors of $\hat{\phi}_{t}^{+}$ and $\hat{\phi}_{t}^{-}$. The following one-sided covering properties hold:
	\[
	\lim_{n\rightarrow\infty}P(\phi_{t}^{+}\leq\hat{\phi}_{t,1-\tau}^{+})\geq 1-\tau,  \lim_{n\rightarrow\infty}P(\phi_{t}^{-}\geq\hat{\phi}_{t,\tau}^{-})\geq 1-\tau.
	\]Suppose 
	\[
	\left\{t\in \boldsymbol{\Theta}_0: \hat{\phi}_{t,1-\tau}^{+}\geq0 \right\}\neq \emptyset, \left\{t\in \boldsymbol{\Theta}_0:\hat{\phi}_{t,\tau}^{-}\leq0\right\}\neq \emptyset.
	\]
	If $\theta = \theta_0$,
	\begin{equation}
		P\left(\theta_{0}\in\left\{t\in \boldsymbol{\Theta}_0: \hat{\phi}_{t,1-\tau}^{+}\geq0 \right\} \right) \geq 1-\tau, P\left(\theta_{0}\in\left\{t\in \boldsymbol{\Theta}_0:\hat{\phi}_{t,\tau}^{-}\leq0\right\} \right)\geq 1-\tau.
	\end{equation}
\end{theorem}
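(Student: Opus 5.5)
The argument has two parts: the one-sided covering statements for $\phi_t^{\pm}$ at a fixed $t$, and then the transfer of coverage to $\theta_0$ through the deterministic sandwich (\ref{OVB_AR}).

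\textbf{Step 1: covering for $\phi_t^{\pm}$.} Fix $t\in\boldsymbol{\Theta}_0$. By Theorem 2,
\[
\sqrt{n}(\hat{\phi}_t^{+}-\phi_t^{+})\to N\bigl(0,\mathbf{C}_t^{+\top}\boldsymbol{\Omega}\mathbf{C}_t^{+}\bigr).
\]
I will assume that $\widehat{\text{Var}}(\hat{\phi}_t^{+})$ is consistent for this variance. This holds under the same regularity conditions of \citet{CCDDHNR_2018}, since $\boldsymbol{\Omega}$ is estimated by the sample analogue of the scores and Jacobian. If the asymptotic variance is positive, Slutsky's lemma gives
\[
(\hat{\phi}_t^{+}-\phi_t^{+})/\text{se}(\hat{\phi}_t^{+})\to N(0,1),
\]
and therefore
\[
P\bigl(\phi_t^{+}\le\hat{\phi}_{t,1-\tau}^{+}\bigr)=P\Bigl(\tfrac{\hat{\phi}_t^{+}-\phi_t^{+}}{\text{se}(\hat{\phi}_t^{+})}\ge-\Phi^{-1}(1-\tau)\Bigr)\to 1-\tau.
\]
The degenerate case of zero asymptotic variance gives a limit of at least $1-\tau$, because $\tau\le 0.5$ makes the added margin nonnegative. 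This is why the statement uses ``$\ge$''. The lower bound for $\phi_t^{-}$ follows symmetrically.

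\textbf{Step 2: transfer to $\theta_0$.} Under $\theta=\theta_0$ we have $\phi_{\theta_0}=\lambda-\gamma\theta_0=0$. By (\ref{OVB_AR}), which holds as a deterministic inequality because of the bounds (\ref{OVB_bound_lambda}) and (\ref{OVB_bound_gamma}), this gives $\phi_{\theta_0}^{+}\ge 0\ge\phi_{\theta_0}^{-}$. This yields the event inclusion
\[
\{\phi_{\theta_0}^{+}\le\hat{\phi}_{\theta_0,1-\tau}^{+}\}\subseteq\{\hat{\phi}_{\theta_0,1-\tau}^{+}\ge 0\}=\bigl\{\theta_0\in\{t\in\boldsymbol{\Theta}_0:\hat{\phi}_{t,1-\tau}^{+}\ge0\}\bigr\},
\]
where the last equality uses $\theta_0\in\boldsymbol{\Theta}_0$. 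Applying Step 1 at the single point $t=\theta_0$ gives the first coverage bound, read as a limit as $n\to\infty$. The second bound follows in the same way from $\phi_{\theta_0}^{-}\le 0$ and $\{\phi_{\theta_0}^{-}\ge\hat{\phi}_{\theta_0,\tau}^{-}\}\subseteq\{\hat{\phi}_{\theta_0,\tau}^{-}\le0\}$. The nonemptiness assumptions only guarantee that the reported sets are well defined and play no role in the probability argument.

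\textbf{Main obstacle.} Only pointwise coverage at the fixed true value is needed, so the argument avoids uniformity over $t$ entirely. The delicate part is Step 1. It requires consistency of the variance estimator for the plug-in $\hat{\phi}_t^{\pm}$ and positivity of $\mathbf{C}_t^{\pm\top}\boldsymbol{\Omega}\mathbf{C}_t^{\pm}$, neither of which is literally stated in Theorem 2. I would first try to import the variance-estimator consistency result from \citet{CCDDHNR_2018} (their Theorem 3.2 style) for the stacked score $\boldsymbol{\psi}$. Consistency for $\hat{\phi}_t^{\pm}$ then follows because $\mathbf{C}_t^{\pm}$ is a continuous function of $(v_s,\sigma_{Y_s},\sigma_{D_s})$, and I would treat the zero-variance case separately as above.

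Throughout I use the fact that $\phi_t^{+}$ and $\phi_t^{-}$ are defined with the fixed sign indicators $1\{t\ge0\}$ and $1\{t<0\}$. At a fixed $t$ they are therefore linear in $(\lambda^{\pm},\gamma^{+},\gamma^{-})$, which is exactly the setting of Theorem 2.
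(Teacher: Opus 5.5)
Your proof is the paper's proof: one-sided coverage of $\phi_t^{\pm}$ from the asymptotic normality in Theorem 2, then the deterministic sandwich $\phi_{\theta_0}^{-}\le\phi_{\theta_0}=0\le\phi_{\theta_0}^{+}$ at the single point $t=\theta_0$ to pass to the events $\hat{\phi}_{\theta_0,1-\tau}^{+}\ge0$ and $\hat{\phi}_{\theta_0,\tau}^{-}\le0$. Your event-inclusion formulation and your reading of the conclusion as a limit are, if anything, cleaner than the paper's write-up.

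One aside is wrong as stated. If $\mathbf{C}_t^{+\top}\boldsymbol{\Omega}\mathbf{C}_t^{+}=0$, then both $\sqrt{n}(\hat{\phi}_t^{+}-\phi_t^{+})$ and $\sqrt{n}\,\text{se}(\hat{\phi}_t^{+})$ are $o_p(1)$. A nonnegative but vanishing margin then does not control the sign, so coverage is no longer guaranteed. Simply assume a positive asymptotic variance, as the paper implicitly does.
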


Define
\[
\text{CI}_{1-2\tau}^{[\lambda^{-},\lambda^{+}]}:=[\hat{\lambda}_{\tau}^{-},\hat{\lambda}_{1-\tau}^{+}],\text{CI}_{1-2\tau}^{[\gamma^{-},\gamma^{+}]}:=[\hat{\gamma}_{\tau}^{-},\hat{\gamma}_{1-\tau}^{+}],
\text{CI}_{1-2\tau}^{[\phi_{t}^{-},\phi_{t}^{+}]}:=\left[\hat{\phi}_{t,\tau}^{-},\hat{\phi}_{t,1-\tau}^{+}\right].
\]
We then have the following results for coverage probability of the partial identification regions and interested parameters.
\begin{theorem}
	\begin{align*}
		\lim_{n\rightarrow\infty}P(\lambda\in\text{CI}_{1-2\tau}^{[\lambda^{-},\lambda^{+}]})\geq\lim_{n\rightarrow\infty}P([\lambda^{-},\lambda^{+}]\in\text{CI}_{1-2\tau}^{[\lambda^{-},\lambda^{+}]})\geq 1-2\tau,\\
		\lim_{n\rightarrow\infty}P(\gamma\in\text{CI}_{1-2\tau}^{[\gamma^{-},\gamma^{+}]})\geq\lim_{n\rightarrow\infty}P([\gamma^{-},\gamma^{+}]\in\text{CI}_{1-2\tau}^{[\gamma^{-},\gamma^{+}]})\geq 1-2\tau,\\
		\lim_{n\rightarrow\infty}P(\phi_{t}\in\text{CI}_{1-2\tau}^{[\phi_{t}^{-},\phi_{t}^{+}]})\geq\lim_{n\rightarrow\infty}P\left(\left[{\phi}_{t}^{-},\phi_{t}^{+}\right]\in\text{CI}_{1-2\tau}^{[\phi_{t}^{-},\phi_{t}^{+}]}\right)\geq 1-2\tau.
	\end{align*}
\end{theorem}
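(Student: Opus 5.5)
The argument is the same for each of the three rows, so I will do it for $\lambda$ and then indicate the changes for $\gamma$ and $\phi_t$. I read $[\lambda^{-},\lambda^{+}]\in\text{CI}$ as set inclusion, $[\lambda^{-},\lambda^{+}]\subseteq[\hat{\lambda}^{-}_{\tau},\hat{\lambda}^{+}_{1-\tau}]$.

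\textbf{First inequality.} This is deterministic. By (\ref{OVB_bound_lambda}) the true $\lambda$ lies in $[\lambda^{-},\lambda^{+}]$. Hence the event $\{[\lambda^{-},\lambda^{+}]\subseteq \text{CI}_{1-2\tau}^{[\lambda^{-},\lambda^{+}]}\}$ is contained in the event $\{\lambda\in\text{CI}_{1-2\tau}^{[\lambda^{-},\lambda^{+}]}\}$. Probabilities are ordered for each $n$, so the ordering survives passage to the limit (limit inferior if the limits do not exist).

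\textbf{Second inequality.} Since $[\lambda^{-},\lambda^{+}]$ is an interval, inclusion holds exactly when both $\hat{\lambda}^{-}_{\tau}\le\lambda^{-}$ and $\lambda^{+}\le\hat{\lambda}^{+}_{1-\tau}$. Let $E_1=\{\lambda^{+}\le\hat{\lambda}^{+}_{1-\tau}\}$ and $E_2=\{\lambda^{-}\ge\hat{\lambda}^{-}_{\tau}\}$. Bonferroni gives
\[
P(E_1\cap E_2)\ge 1-P(E_1^{c})-P(E_2^{c}).
\]
The one-sided coverage results after (\ref{lambda_ovb_CI}) say $\lim P(E_1)\ge 1-\tau$ and $\lim P(E_2)\ge1-\tau$, which yields the bound $1-2\tau$. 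These follow from the asymptotic normality of $\hat{\lambda}^{\pm}$ together with consistency of the standard errors.

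\textbf{The $\gamma$ row.} It is identical, using (\ref{OVB_bound_gamma}) and the one-sided properties after (\ref{gamma_ovb_CI}).

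\textbf{The $\phi_t$ row.} For fixed $t$, (\ref{OVB_AR}) gives $\phi_t\in[\phi_t^{-},\phi_t^{+}]$, which handles the first inequality. The one-sided coverage properties of $\hat{\phi}^{+}_{t,1-\tau}$ and $\hat{\phi}^{-}_{t,\tau}$ are stated in Theorem 3. They rest on the asymptotic normality in Theorem 2 and consistent variance estimation, and they feed the same Bonferroni step.

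\textbf{Main obstacle.} The only delicate point is the well-definedness of $\text{CI}_{1-2\tau}^{[\phi_t^{-},\phi_t^{+}]}$. Since $\phi_t^{\pm}$ involve $|t|$ and the indicators $1\{t\ge0\}$, one must check that the plug-in $\hat{\phi}^{\pm}_t$ are linear in $(\hat{\lambda}^{\pm},\hat{\gamma}^{\pm})$ for fixed $t$, so that Theorem 2 applies. This holds for fixed $t$, including $t=0$, where the $\gamma$ terms vanish.

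A second point is that when the asymptotic variance is degenerate, the coverage statement still holds as $\ge$. I would handle this by noting that a degenerate limit gives probability $1$ for each one-sided event, so the inequality is not strict-sensitive. Finally, the requirement $\tau\le0.5$ ensures $\hat{\lambda}^{-}_{\tau}\le\hat{\lambda}^{-}$ and $\hat{\lambda}^{+}\le\hat{\lambda}^{+}_{1-\tau}$. This is not needed for coverage, only so that the interval is nonempty when $\hat{\lambda}^{-}\le\hat{\lambda}^{+}$.
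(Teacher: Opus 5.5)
Your proposal is correct and matches the paper's proof. The first inequality follows from $\lambda\in[\lambda^{-},\lambda^{+}]$ and its analogues for $\gamma$ and $\phi_t$; your event-inclusion phrasing is in fact tighter than the paper's comparison of marginal probabilities. The $1-2\tau$ bound comes from a Bonferroni bound on the two one-sided coverage statements, taken from Theorem 3 for the $\phi_t$ row.

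Your side remark that a degenerate asymptotic variance forces each one-sided event to have limiting probability one is not correct in general, because the studentized statistic is then uncontrolled. Nothing in your main argument relies on it.
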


As the product of the sensitivity parameters satisfies $\zeta_{Y,\alpha}\rightarrow 0$, the interval $\text{CI}_{1-2\tau}^{[\lambda^{-},\lambda^{+}]}$ converges to the (conventional) $(1-2\tau)$ C.I. for the short version parameter $\lambda_{s}$. Similarly, as $\zeta_{D,\alpha}\rightarrow 0$, $\text{CI}_{1-2\tau}^{[\gamma^{-},\gamma^{+}]}$ converges to the $(1-2\tau)$ C.I. for $\gamma_{s}$. For $\phi_{t}$ and $\theta_0$, assume that 
\[
\left\{t\in \boldsymbol{\Theta}_0: \hat{\phi}_{t,1-\tau}^{+}\geq0 \right\}\cap \left\{t\in \boldsymbol{\Theta}_0
:\hat{\phi}_{t,\tau}^{-}\leq0\right\}\neq \emptyset.
\]
Define 
\begin{equation}
	\hat{\theta}_{0,1-\tau}^{+}:=\sup_{t\in \boldsymbol{\Theta}_0}\left\{t: \hat{\phi}_{t,1-\tau}^{+}\geq0 \right\}\cap \left\{t:\hat{\phi}_{t,\tau}^{-}\leq0\right\},
	\hat{\theta}_{0,\tau}^{-}:=\inf_{t\in\boldsymbol{\Theta}_0}\left\{t: \hat{\phi}_{t,1-\tau}^{+}\geq0 \right\}\cap \left\{t:\hat{\phi}_{t,\tau}^{-}\leq0\right\},\label{theta0_ovb_CI}
\end{equation} When both $\zeta_{Y,\alpha}\rightarrow 0$ and $\zeta_{D,\alpha}\rightarrow 0$ hold, $\text{CI}_{1-2\tau}^{[\phi_{t}^{-},\phi_{t}^{+}]}$
converges to the $(1-2\tau)$ C.I. for $\lambda_{s}-\gamma_{s}t$, and $\left[\hat{\theta}_{0,\tau}^{-},\hat{\theta}_{0,1-\tau}^{+}\right]$ converges to the interval obtained by inverting the positive and negative parts of the $(1-2\tau)$ C.I. for $\lambda_{s}-\gamma_{s}t$.

\subsection{OVB-Adjusted Confidence Interval for the Interested Parameter}
Theorem 4 appears to suggest that $\text{CI}_{1-2\tau}^{[\lambda^{-},\lambda^{+}]}$,
$\text{CI}_{1-2\tau}^{[\gamma^{-},\gamma^{+}]}$ and $\text{CI}_{1-2\tau}^{[\phi_{t}^{-},\phi_{t}^{+}]}$,
which provide valid $(1-2\tau)$ coverage for the OVB
bounds, can be used directly as ($1-2\tau$) C.I.'s for $\lambda$, $\gamma$
and $\phi_{t}$ after accounting for OVB. However, these OVB-adjusted C.I.'s can be further improved to yield intervals. To illustrate this, we use $\lambda$ as an example (and the relevant results also hold for $\gamma$ and $\phi_{t}$). 

Let $\Delta_{\lambda}=\lambda^{+}-\lambda^{-}$
denote the identification region of $\lambda$. In our framework, this is a constant and is not affected by sample size. If $\Delta_{\lambda}$ has a strictly positive length (i.e., $\lambda^{+}>\lambda^{-}$), it can be shown that
\begin{align}
	P(\lambda\in\text{CI}_{1-2\tau}^{[\lambda^{-},\lambda^{+}]})= & P(\{ \lambda\geq\hat{\lambda}_{\tau}^{-}\} \cap\{ \lambda\leq\hat{\lambda}_{1-\tau}^{+}\} )\nonumber \\
	= & 1-P(\{ \lambda<\hat{\lambda}_{\tau}^{-}\} \cup\{ \lambda>\hat{\lambda}_{1-\tau}^{+}\} )\nonumber \\
	\geq & 1-P(\lambda<\hat{\lambda}_{\tau}^{-})-P( \lambda>\hat{\lambda}_{1-\tau}^{+})\nonumber \\
	= & 1-(1-P( \lambda\geq\hat{\lambda}^{-}-\Phi^{-1}(1-\tau)\text{se}(\hat{\lambda}^{-})))-(1-P(\lambda\leq\hat{\lambda}^{+}+\Phi^{-1}(1-\tau)\text{se}(\hat{\lambda}^{+})))\label{key_ovb_adjust_CI_condition}\\
	\rightarrow & \begin{cases}
		1 & \text{if }\lambda^{-}<\lambda<\lambda^{-},\\
		1-\tau & \text{if }\lambda=\lambda^{-},\\
		1-\tau & \text{if }\lambda=\lambda^{+}.
	\end{cases}\nonumber 
\end{align}
as $n\rightarrow\infty$. Hence \[\lim_{n\rightarrow\infty}P(\lambda\in\text{CI}_{1-2\tau}^{[\lambda^{-},\lambda^{+}]})\geq1-\tau.\]
This result, established in a more general form by \citet{IM_2004}, implies that a $(1-2\tau)$ C.I. for the OVB bounds $[\lambda^{-},\lambda^{+}]$
can be served as a $(1-\tau)$ C.I. for the true parameter $\lambda$.

The key condition for this result is the strict positivity of $\Delta_{\lambda}$. When $\Delta_{\lambda}$
is strictly positive, the true parameter $\lambda$ cannot be near both the upper bound $\lambda^{+}$ and lower bound $\lambda^{-}$ simultaneously. If $\lambda$ lies strictly inside the bounds, the
non-coverage risk converges to zero asymptotically. If $\lambda$ is
close to the lower (upper) bound, the risk that it exceeds the upper bound (falls short of the lower bound) is asymptotically negligible. Thus at least one of the second and third terms in (\ref{key_ovb_adjust_CI_condition}) vanishes asymptotically, ensuring coverage no smaller than the one-sided case. 

However, $\text{CI}_{1-2\tau}^{[\lambda^{-},\lambda^{+}]}$ is not a uniformly valid $(1-\tau)$ C.I. for $\lambda$ over $\Delta_{\lambda}$. For instance, in the extreme
scenario where $\Delta_{\lambda}=0$ (which cannot be ruled out in our case), $\lambda$ is point-identified ($\lambda=\lambda^{+}=\lambda^{-}$)
and $\text{CI}_{1-2\tau}^{[\lambda^{-},\lambda^{+}]}$ reverts to a conventional $(1-2\tau)$ C.I.: $\lim_{n\rightarrow\infty}P(\lambda\in\text{CI}_{1-2\tau}^{[\lambda^{-},\lambda^{+}]})\geq1-2\tau$. That is, the asymptotic non-coverage risk becomes $1-2\tau$, making the C.I. too narrow. 

To construct a uniformly valid $(1-\tau)$ OVB-adjusted C.I., we adopt the shrinkage method of \citet{Stoye_2009}, which ensures that the C.I. is not narrower than the conventional C.I. as $\Delta_{\lambda}$ is close to zero. \citet{IM_2004} also provided a method to avoid the above difficulty based on the estimated size of the identification region $\hat{\Delta}_{\lambda}$ (see equation (7) in \citet{IM_2004}). However, \citet{Stoye_2009} pointed out that Imben and Manski's approach implicitly relies on the superefficiency of $\hat{\Delta}_{\lambda}$: the estimation error of $\hat{\Delta}_{\lambda}$
must vanish fast enough (at least with rate $o_{p}(n^{-1/2})$) when
$\Delta_{\lambda}$ is near zero. Such a requirement is too restrictive for our proposed DML estimator.

The shrinkage method of \citet{Stoye_2009} adjusts the estimator of the identification region and allows the estimation error of $\hat{\Delta}_{\lambda}$ can be with order $O_{p}(n^{-1/2})$. We again use $\lambda$ as an example to
illustrate how to use this method to construct the OVB-adjusted C.I.
for $\lambda$ as follows. Suppose there exists a sequence $\vartheta_{n}$ such
that $\vartheta_{n}\rightarrow0$ and $\sqrt{n}\vartheta_{n}\rightarrow\infty$. Define
\[
\hat{\Delta}_{\lambda}^{*}:=\begin{cases}
	\hat{\Delta}_{\lambda}, & \text{if }\hat{\Delta}_{\lambda}>\vartheta_{n},\\
	0, & \text{Otherwise},
\end{cases}
\]
and 
\[
\hat{\rho}:=\widehat{\text{Corr}}(\hat{\lambda}^{+},\hat{\lambda}^{-})=\frac{\widehat{\text{Cov}}(\hat{\lambda}^{+},\hat{\lambda}^{-})}{\sqrt{\widehat{\text{Var}}(\hat{\lambda}^{+})}\sqrt{\widehat{\text{Var}}(\hat{\lambda}^{-})}}.
\]The sequence $\vartheta_{n}$ controls the degree of shrinkage imposed on the initial estimator of the identification region. A slower decay of $\vartheta_{n}$ to zero indicates a less distortion from the shrinkage but a worsen accuracy of using the uniform normal approximation. The critical values for constructing the $(1-\tau)$ OVB-adjusted C.I., denoted by $z_{l}^{*}$ and $z_{u}^{*}$, are obtained from solving the following constrained minimization problem: 
\[
\min_{z_{l},z_{u}}z_{l}\sqrt{\widehat{\text{Var}}(\hat{\lambda}^{-})}+z_{u}\sqrt{\widehat{\text{Var}}(\hat{\lambda}^{+})}
\]
subject to 
\begin{align*}
	P\left(-z_{l}\leq Z_{1},\hat{\rho}Z_{1}\leq z_{u}+\frac{\hat{\Delta}_{\lambda}^{*}}{\text{se}(\hat{\lambda}^{+})}+\sqrt{1-\hat{\rho}^{2}}Z_{2}\right) & \geq1-\tau,\\
	P\left(-z_{l}-\frac{\hat{\Delta}_{\lambda}^{*}}{\text{se}(\hat{\lambda}^{-})}+\sqrt{1-\hat{\rho}^{2}}Z_{2}\leq\hat{\rho}Z_{1},Z_{1}\leq z_{u}\right) & \geq1-\tau,
\end{align*}
where $(Z_{1},Z_{2})$ are two independent standard normal random
variables. Then applying Proposition 3 of \citet{Stoye_2009} yields:
\[
\lim_{n\rightarrow\infty}P(\lambda\in\text{CI}_{1-\tau}^{\lambda,*})\geq1-\tau,
\]
where 
\[
\text{CI}_{1-\tau}^{\lambda,*}=\begin{cases}
	\left[\hat{\lambda}^{+}-\text{se}(\hat{\lambda}^{-})z_{l}^{*},\hat{\lambda}^{+}+\text{se}(\hat{\lambda}^{+})z_{u}^{*}\right], & \text{if }\hat{\lambda}^{-}-\text{se}(\hat{\lambda}^{-})z_{l}^{*}\leq\hat{\lambda}^{+}+\text{se}(\hat{\lambda}^{+})z_{u}^{*},\\
	\emptyset, & \text{Otherwise},
\end{cases}
\]
is the $(1-\tau)$ OVB-adjusted C.I. for $\lambda$. Similar procedures
can be applied to construct the $(1-\tau)$ OVB-adjusted C.I. for $\gamma$ and $\phi_{\theta_{0}}$. For $\theta_{0}$, we first can have that for $t\in\boldsymbol{\Theta}_{0}$,
\[\lim_{n\rightarrow\infty}P(\phi_{t}\in\text{CI}_{1-\tau}^{\phi_{t},*})\geq1-\tau,\]
where 
\[
\text{CI}_{1-\tau}^{\phi_{t},*}=\begin{cases}
	\left[\hat{\phi}_{t}^{-}-\text{se}(\hat{\phi}_{t}^{-})z_{l}^{\prime*},\hat{\phi}_{t}^{+}+\text{se}(\hat{\phi}_{t}^{+})z_{u}^{\prime*}\right], & \text{if }\hat{\phi}_{t}^{-}-\text{se}(\hat{\phi}_{t}^{-})z_{l}^{\prime*}\leq\hat{\phi}_{t}^{+}+\text{se}(\hat{\phi}_{t}^{+})z_{u}^{\prime*},\\
	\emptyset, & \text{Otherwise},
\end{cases}
\]
and $(z_{l}^{\prime*},z_{u}^{\prime*})$ are the solutions
for the corresponding constrained minimization problem of Stoye's
shrinkage method for $\phi_{t}$. If $\theta=\theta_{0}$, $\phi_{\theta_{0}}=0$,
then we have:
\begin{equation}
	\lim_{n\rightarrow\infty}P\left(\theta_{0}\in\left\{ t\in\boldsymbol{\Theta}_{0}:0\in\text{CI}_{1-\tau}^{\phi_{t},*}\right\} \right)\geq1-\tau.\label{OVB_adj_bound_theta0}
\end{equation}
Therefore $(1-\tau)$ OVB-adjusted C.I. for $\theta_{0}$ can be constructed accordingly. 

\subsection{K-Fold Cross-Fitting and the Median Method}
In this section, we introduce two methods which can be adopted to further mitigate overfitting bias caused by machine learning estimators: K-fold cross-fitting and the median method \citep{CCDDHNR_2018}. K-fold cross-fitting uses different parts of samples to repeatedly estimate and predict parameters: $(\lambda_s, \gamma_s,v^{2}_s,\sigma^{2}_{Y_s},\sigma^{2}_{D_s})$, and take an average of the predictions to form the final estimate of the parameter. Let $W_{s,i}=(Z_i,X_i)$ and $V_{s,i}=(Y_{i},D_{i},W_{s,i})$ denote the $i$-th available observation, $i=1,2,\ldots,n$. Below we illustrate the procedures for conducting the K-fold cross fitting on estimating $\theta_s=\lambda_s/\gamma_s$ with estimators based on (\ref{LATE_DML}): 
\begin{enumerate}
	\item Randomly split the $n$ samples into $K$ (mutually exclusive) subsamples
	of equal sample size $n_{k}=n/K$, $k=1,2,\ldots,K$.
	Let $I_{k}$, $k=1,2,\ldots,K$ denote the set of indices for the
	$K$ different subsamples. Let $I_{k}^{c}$, $k=1,2,\ldots,K$ denote
	the complement set of $I_{k}$: $I_{k}^{c}=\left\{ 1,2,\ldots,n\right\} \setminus I_{k}$.
	\item For each $k$, estimate models of the nuisance parameters $\pi_s(X)$, $E[Y|W_s]$ and $E[D|W_s]$ with the available observations $V_{s,i}$,
	$i\in I_{k}^{c}$. Using the available observations $V_{s,i}$, $i\in I_{k}$ to obtain predictions of the nuisance parameters: 
	$\hat{\pi}_s^{(k)}(X_i)$, $\hat{E}^{(k)}[Y_i|Z_i=1,X_i]$, $\hat{E}^{(k)}[Y_i|Z_i=0,X_i]$, $\hat{E}^{(k)}[D_i|Z_i=1,X_i]$ and $\hat{E}^{(k)}[D_i|Z_i=0,X_i]$. 
	\item For each $k$, compute the estimate of $\lambda_s$ and $\gamma_s$
	using the predicted nuisance parameters of step 2 as
	\begin{equation*}
		\lambda_{s}^{(k)} = \frac{1}{n_{k}}\sum_{i\in I_{k}}\hat{\bar{\psi}}_{\lambda_{s}}^{(k)}(Y_{i},W_{s,i}),\text{ }
		\gamma_{s}^{(k)} = \frac{1}{n_{k}}\sum_{i\in I_{k}}\hat{\bar{\psi}}_{\gamma_{s}}^{(k)}(D_i,W_{s,i}),
	\end{equation*}where
	\begin{eqnarray*}
		\hat{\bar{\psi}}_{\lambda_{s}}^{(k)}(Y_{i},W_{s,i}) & = & \frac{Z_{i}}{\hat{\pi}_{s}^{(k)}(X_{i})}(Y_{i}-\hat{E}^{(k)}[Y_{i}|Z_{i}=1,X_{i}])-\frac{1-Z_{i}}{1-\hat{\pi}_{s}^{(k)}(X_{i})}(Y_{i}-\hat{E}^{(k)}[Y_{i}|Z_{i}=0,X_{i}])+\\
		&  & \hat{E}^{(k)}[Y_{i}|Z_{i}=1,X_{i}]-\hat{E}^{(k)}[Y_{i}|Z_{i}=0,X_{i}],\\
		\hat{\bar{\psi}}_{\gamma_{s}}^{(k)}(D_i,W_{s,i}) & = & \frac{Z_{i}}{\hat{\pi}_{s}^{(k)}(X_{i})}(D_{i}-\hat{E}^{(k)}[D_{i}|Z_{i}=1,X_{i}])-\frac{1-Z_{i}}{1-\hat{\pi}_{s}^{(k)}(X_{i})}(D_{i}-\hat{E}^{(k)}[D_{i}|Z_{i}=0,X_{i}])+\\
		&  & \hat{E}^{(k)}[D_{i}|Z_{i}=1,X_{i}]-\hat{E}^{(k)}[D_{i}|Z_{i}=0,X_{i}].
	\end{eqnarray*}\item Average $\hat{\lambda}_{s}^{(k)}$ and $\hat{\gamma}_{s}^{(k)}$ over $k=1,2,\ldots,K$
	to obtain the estimates of $\lambda_s$ and $\gamma_s$:
	\[\hat{\lambda}_s=\frac{1}{K}\sum_{k=1}^{K}\hat{\lambda}_{s}^{(k)}, \hat{\gamma}_s=\frac{1}{K}\sum_{k=1}^{K}\hat{\gamma}_{s}^{(k)}.\]The estimate of $\theta_s$ is:
	\begin{equation}
		\hat{\theta}_s = \frac{\hat{\lambda}_s}{\hat{\gamma}_s}.\label{K_fold_theta_s}
	\end{equation}
\end{enumerate}

$\hat{\theta}_{s}$ in (\ref{K_fold_theta_s}) is referred to as DML2 estimator in \citet{CCDDHNR_2018} and is the equivalent to solving $\theta_{s}$ in the following equation: \[\frac{1}{K}\sum_{k=1}^{K}\left[\frac{1}{n_{k}}\sum_{i\in I_{k}}\hat{\bar{\psi}}_{\theta_{s}}^{(k)}(Y_{i},D_{i},W_{s,i})\right]=0,\]where \[\hat{\bar{\psi}}_{\theta_{s}}^{(k)}(Y_{i},D_{i},W_{s,i})=\hat{\bar{\psi}}_{\lambda_{s}}^{(k)}(Y_{i},W_{s,i})-\hat{\bar{\psi}}_{\gamma_{s}}^{(k)}(D_{i},W_{s,i})\theta_{s}.\]Alternatively, we may estimate $\theta_{s}$ by using DML1 estimator: $\hat{\theta}_{s}^{\prime}=1/K\sum_{k=1}^{K}\hat{\lambda}_{s}^{(k)}/\hat{\gamma}_{s}^{(k)}$, that is, taking an average of $\hat{\lambda}_{s}^{(k)}/\hat{\gamma}_{s}^{(k)}$, $k=1,\ldots,K$. In practice, DML2 estimator is more preferred than DML1 estimator, since the former generally has a more stable property than the latter and therefore demonstrates a better performance empirically \citep{CCDDHNR_2018}.

Furthermore, to avoid uncertainty from sample splitting in the K-fold cross-fitting, we adopt the median method suggested in \citet{CCDDHNR_2018} to improve stability of our final estimates of $(\lambda_{s},\gamma_{s},v_{s}^{2},\sigma_{Y_{s}}^{2},\sigma_{D_{s}}^{2})$. To implement the median method, we first repeat the procedures of the K-fold cross-fitting $L$ times. 
Let $\hat{\boldsymbol{\xi}}^{l}$ denote a vector of the estimated parameters and $\hat{\mathbf{\Sigma}}^{l}$ denote the estimated approximate covariance matrix of $\sqrt{n}(\hat{\boldsymbol{\xi}}^{l}-\boldsymbol{\xi})$ (i.e., $\boldsymbol{\Omega}$ in Theorem 2), from the $l$th K-fold cross-fitting, $l=1,\ldots,L$. We use 
\begin{equation}
	\hat{\boldsymbol{\xi}}^{\text{Median}}=\text{Median}\{\hat{\boldsymbol{\xi}}^{l}\}_{l=1}^{L}\label{med_par}
\end{equation}as the final estimate of the parameters and
\begin{equation}
	\hat{\mathbf{\Sigma}}^{\text{Median}} = \text{Median}\{\hat{\mathbf{\Sigma}}^{l}+(\hat{\boldsymbol{\xi}}^{l}-\hat{\boldsymbol{\xi}}^{\text{Median}})(\hat{\boldsymbol{\xi}}^{l}-\hat{\boldsymbol{\xi}}^{\text{Median}})^{\top}\}_{l=1}^{L}\label{med_app_cov}
\end{equation}as the final estimate of the approximate covariance matrix.\footnote{The ``Median'' in (\ref{med_par}) chooses the median among the $L$ cross fittings for each of the estimated parameters , while in (\ref{med_app_cov}), it chooses the matrix with median operator norm.} 

\section{Empirical Application with the JTPA Data}
In this section, we demonstrate the usefulness of our proposed method for quantifying the OVB in nonlinear IV estimators through an empirical application. We perform an OVB analysis for LATE and LATT estimations in Title II programs of the Job Training Partnership Act (JTPA) in the US. The data consist of adult male and female workers who participated in these programs between November 1987 and September 1989. 
Following \citet{AAI_2002}, we assume the observations are i.i.d. for estimation purposes. The outcome variable $Y$ is the total earnings in the 30 months. 
The treatment variable $D$ is a binary variable for enrollment in the JTPA services (1 = enrolled; 0 = not enrolled), while the instrumental variable $Z$ indicates whether the individual was offered such services (1 = offered; 0 = not offered). The exogenous covariates include age (\texttt{age}), which is a categorical variable, as well as a set of dummy variables: black (\texttt{black}), Hispanic (\texttt{hispanic}), high-school graduates (\texttt{hsorged}, including GED holders), marital status (\texttt{married}), AFDC receipt (\texttt{adfc}, for adult female workers only), whether the applicant worked at least 12 weeks in the 12 months preceding random assignment (\texttt{wkless13}), the original recommended service strategy: classroom training (\texttt{class\_tr}), and OJT/JSA/other (\texttt{ojt\_jsa}), and whether earnings data were from the second follow-up survey (\texttt{f2sms}). 
The total sample size is 11,204 (5,102 males and 6,102 females).


Although offers for the JTPA services were randomly assigned, only approximately 60\% of those offered the services actually enrolled \citep{AAI_2002}. This partial compliance raises potential endogeneity concerns, as treatment status may be self-selected and correlated with the potential outcomes. Since the offers were randomly assigned and were considered to likely influence participants' intention to enroll in the program, we use the offer assignment as the instrumental variable. While some individuals received services without being assigned, \cite{AAI_2002} note that this violation was rare (less than 2\%) and thus unlikely to materially affect our estimates.

For LATE, Figure \ref{figure1} and \ref{figure2} present sensitivity contour plots for the lower bounds of the 97.5\% confidence intervals (C.I.s) for $\lambda^{-}$ (left panel) and $\gamma^{-}$ (right panel) assuming $|\rho_Y|=|\rho_D|=1$. Figure \ref{figure1} corresponds to male workers and Figure \ref{figure2} to female workers. Each contour line shows the lower bound of the 97.5\% C.I. when the product of $C_YC_{\alpha}$ (or $C_DC_{\alpha}$) equals to a specific value. For instance, consider the case of male workers. When $C_YC_{\alpha}$=4.55e-3 (i.e., $C_Y=0.1$ and $C_{\alpha} = 0.0455$), the contour line indicates that the lower bound of the 97.5\% confidence interval for $\lambda^{-}$ roughly equals to -300.

The sensitivity parameters $C_Y C_{\alpha}$ and $C_D C_{\alpha}$ have a negative impact on the lower bounds of the C.I.s for $\lambda^{-}$ and $\gamma^{-}$. 
For the male workers, when $C_YC_{\alpha}=0$ (either $C_Y=0$ or $C_{\alpha} = 0$, or both), the lower bound is -114.99, suggesting that even without considering the OVB, the (short) estimate of $\lambda$ (ITT) is not statistically significant at the 5\% level. In fact, the value for $C_Y C_{\alpha}$ to push the lower bound below zero is a slightly negative (roughly equals to -0.003), which is not feasible, since $C_Y C_{\alpha}$ is required to be nonnegative. For female workers, the corresponding threshold for $C_Y C_{\alpha}$ roughly equal to 0.019. In contrast, for $\gamma^{-}$, as shown in the right panels of Figures \ref{figure1} and \ref{figure2}, the thresholds are much less stringent than those for $\lambda^{-}$. For both male and female workers, the criteria both exceed 0.65, indicating that estimates of $\gamma$ ($P(T=C)$) are much robust to the omitted variables compared to the estimates of $\lambda$ (ITT).   

We next turn to the results of LATE estimation when considering the OVB. These results are obtained using the calibrated sensitivity parameters $C_{\alpha}$, $C_Y$ and $C_D$. To determine the calibrated value, we first estimate the sensitivity parameter separately for each covariate using the method introduced in the benchmarking analysis. We set the relative strength indicator $k_{\alpha} =k_Y= k_D = 1$, which implies that the omitted variable is assumed to be at least as important as any excluded covariate $X_j$ in predicting $(Y,D,Z)$, given the remaining covariates $X_{-j}$. In addition, we set $|\rho_Y| = |\rho_D| = 1$, which yields the maximal values of the estimated sensitivity parameters. 
Then the largest among these estimates is selected as the calibrated value of the sensitivity parameter. This maximum estimate (calibrated value of the sensitivity parameter) and associated covariate (denoted by $X_j^{*}$) are reported in Table \ref{table1}.\footnote{Here, the variable \texttt{age} represents all age-related categorical variables.} For LATE, the results indicate that if the omitted variable is as important as $X_j^{*}$, including it would enhance prediction precision for $Z$ by 1.9\% ($0.138^2$) among male workers and 0.62\% ($0.079^2$) among female workers. In the case of male workers, the reduction in MSE when predicting $Y$ and $D$ would be 2.2\% and 0.18\%, while for female workers, the corresponding reductions are 3.2\% and 0.35\%. 
Overall, the estimated influence of the omitted variable on the prediction of $(Y,D,Z)$ appears to be small in the case of LATE estimation. 

The first three columns of Table \ref{table2} present short estimates (those estimated with the available data) and their corresponding 95\% C.I.s, and estimates of the OVB bounds for the parameters. 
Estimates of the OVB bounds for $\lambda$ and $\gamma$ are estimates of $(\lambda^{+},\lambda^{-})$ and $(\gamma^{+},\gamma^{-})$, while estimates of the OVB bounds for $\theta$ are estimates obtained from using the derived result in Theorem 1. 
From the table, the short estimates of $\gamma$ ($P(T=C)$) are statistically significant at the 5\% level for both male and female workers. However, the significance of $\lambda$ (ITT) and $\theta$ (LATE) differs by gender: for female workers, both estimates remain statistically significant at the 5\% level, while for male workers, they do not. 
When accounting for the OVB based on the calibrated sensitivity parameters, estimates of the OVB bounds suggest that the true LATE for the male workers lies within the range of 317 to 3,044 U.S. dollars, and for the female workers, the range is within 1,279 to 2,531 U.S. dollars. 

The last column shows the estimated bounds $[\hat{\theta}^{-}_{0,\tau},\hat{\theta}^{+}_{0,1-\tau}]$ in (\ref{theta0_ovb_CI}),  $[\hat{\lambda}^{-}_{\tau},\hat{\lambda}^{+}_{1-\tau}]$ in (\ref{lambda_ovb_CI}) and $[\hat{\gamma}^{-}_{\tau},\hat{\gamma}^{+}_{1-\tau}]$ in (\ref{gamma_ovb_CI}) with $\tau = 0.025$, which are denoted by $\text{Low}_{0.025}$ and $\text{Up}_{0.975}$ in the table. Through Figure \ref{figure3}, we illustrate how to practically use (\ref{theta0_ovb_CI}) to obtain $[\hat{\theta}^{-}_{0,\tau},\hat{\theta}^{+}_{0,1-\tau}]$. Figure \ref{figure3} plots the estimated functions $\hat{\phi}_{t,0.975}^{+}$ and $\hat{\phi}_{t,0.025}^{-}$ over different values of $t$ based on the calibrated sensitivity parameters. Each function is segmented into two parts: one for $t\geq 0$ (solid line) and one for $t<0$ (dashed line). 
The two estimated functions are generally continuous in $t$ but not
differentiable at $t=0$. Within the selected range of $t$ in the plots, both segments of the estimated functions decrease monotonically. 

For the male workers, the plot shows that $\hat{\phi}_{t,0.975}^{+}\geq0$ if $t\leq4,916.2$, and
$\hat{\phi}_{t,0.025}^{-}\leq0$ if $t\geq-1,551.26$. According to
(\ref{theta0_ovb_CI}), this implies that $[\hat{\theta}^{-}_{0,0.025},\hat{\theta}^{+}_{0,0.975}]= [-1,551.26, 4,916.20]$. For the female workers, applying the same logic yields $[\hat{\theta}^{-}_{0,0.025},\hat{\theta}^{+}_{0,0.975}]= [198.50, 3,625.84]$ based on the corresponding calibrated sensitivity parameters. From the results, it can be seen that as the uncertainty associated with OVB bound estimation is incorporated, the statistical significance results are align with those based on the point estimates. In particular, for female workers, the statistical significance of LATE (and ITT) estimate remains robustly stand after accounting for the OVB and uncertainty of the estimation.


For LATT, the relevant results are shown in Table \ref{table1} and \ref{table3} and Figure \ref{figure4} to \ref{figure6}. The results are qualitatively similar as those for LATE. Specifically, the estimates of $\gamma$ are much robust to the omitted variables than the estimates of $\lambda$. For male workers, the threshold for $C_Y C_{\alpha}$ that brings the lower bound below zero roughly equals to -0.004. For female workers, the corresponding threshold is approximately 0.02. In contrast, the thresholds for $\gamma^{-}$, shown in the right panels of Figures \ref{figure4} and \ref{figure5}, are much less stringent than those for $\lambda^{-}$. For both male and female workers, these thresholds are above 0.64, reinforcing the robustness of the estimated $\gamma$ to the omitted variables. Results of the OVB-adjusted estimates are again align with those of the point estimates. Importantly, the statistical significance of both $\lambda$ and LATT estimates for the female workers remains robust even after accounting for both the OVB and uncertainty in the estimation of its bounds.

\subsection{Statistical Significance after Accounting for the OVB} 
Table \ref{table4} to \ref{table5} present the results for the $(1-\tau)$ OVB-adjusted confidence intervals for LATE and LATT constructed using the shrinkage method
of \citet{Stoye_2009}. We set the significance level $\tau=0.05$ (i.e., 95\% C.I.)
and the shrinkage factor as
\begin{equation}
\vartheta_{n}=\sqrt{\frac{\log\log n}{n}}\times\max\{\hat{\sigma}_{l},\hat{\sigma}_{u}\},\label{shrinkage_factor_Stoye09}
\end{equation}
where $\hat{\sigma}_{l}$ and $\hat{\sigma}_{u}$ denote the estimated standard deviations of the lower and upper OVB bounds. The shrinkage factor (\ref{shrinkage_factor_Stoye09}) is the one (from the iterated law of logarithm) suggested in \citet{Stoye_2009}, scaled by $\max\{\hat{\sigma}_{l},\hat{\sigma}_{u}\}$.    

For $\lambda$ and $\gamma$, $(z_{l}^{*},z_{u}^{*})$ denote the critical values, $\hat{\Delta}^{*}$ denote the estimate of the identification region, and Min.Obj. denote the minimum value of the constrained minimization for Stoye's shrinkage method. To construct the $(1-\tau)$ OVB-adjusted C.I. for $\theta_{0}$ (LATE or LATT), we first compute $\text{C.I.}_{1-\tau}^{\phi_{t},*}$ for $t$ over a specified range, and then obtain the upper and lower bounds of the C.I. by inverting (\ref{OVB_adj_bound_theta0}). For $\theta_{0}$, the reported values of $(z_{l}^{*},z_{u}^{*})$, $\hat{\Delta}^{*}$ and Min.Obj. correspond to those for $\phi_{t}$, averaged over different values of $t$. Figure \ref{figure7} and \ref{figure8} show plots of the upper and lower bounds of $\text{C.I.}_{1-\tau}^{\phi_{t},*}$
as functions of $t$ for LATE and LATT. 

From the two tables, we observe that at the 95\% level, the statistical significance of $\theta_{0}$, $\lambda$ and $\gamma$
after accounting for OVB is qualitatively similar to the results obtained without OVB adjustment (i.e., the short estimates). For $\theta_{0}$ and $\lambda$, the 95\% OVB-adjusted
C.I.'s are narrower than the intervals $[\text{Low}_{0.025},\text{Up}_{0.975}]$
shown in previous tables. This is due to relatively large estimates of the identification regions for $\phi_{t}$ and $\lambda$, which lead to lower (one-sided) critical values. In our settings, the critical values used are almost equal to 1.645 or 1.96, corresponding to the 95\% or 97.5\% quantile of the standard normal distribution. This occurs because for each of $\phi_{t}$, $\lambda$ and $\gamma$, the estimated correlation between the estimated upper and lower OVB bounds is very close to one. As shown in \citet{Stoye_2009}, in such a situation, the solved critical values for the $(1-\tau)$ OVB-adjusted C.I. are very close to $(1-\tau)$ (or $(1-2\tau)$) quantile of the standard normal distribution.

\section{Conclusion}
This paper develops a general framework for quantifying omitted variable bias (OVB) in nonlinear instrumental variable (IV) estimators. Extending the recent work of \citet{CCNSS_2024}, we analyze a class of estimators — including the local average treatment effect (LATE), LATE on the treated (LATT), and the partially linear IV model (PLIVM) — that can be expressed as ratios of reduced-form and first-stage parameters. We derive bias decompositions for these parameters, establish partial identification bounds for the structural estimand, and construct statistical inference procedures that yield OVB-adjusted confidence intervals. Estimation is conducted via double machine learning and the median method \citep{CCDDHNR_2018}. An empirical application to the U.S. Job Training Partnership Act (JTPA) experiment shows that estimates of the first-stage probability of compliance are robust to omitted variables, while intention-to-treat and treatment effect estimates are more sensitive. Specifically, female participants exhibit robust and statistically significant program impacts, whereas effects for males become fragile once OVB is or nor accounted for. Overall, this study provides a unified framework for sensitivity analysis of nonlinear IV estimators and offers practical tools for assessing the robustness of causal conclusions in applied research.
\clearpage
\appendix
\section{Proof of Theorems}
\subsection{Proof of Theorem 1}
\begin{proof}
	If $\theta = \theta_0$, $\phi_{\theta_{0}}=0$. It follows that $0\in[\phi_{\theta_{0}}^{-},\phi_{\theta_{0}}^{+}]$, which implies that $\phi_{\theta_{0}}^{+}\geq0$ and $\phi_{\theta_{0}}^{-}\leq0$ both hold. This result can be used to obtain the OVB bound for $\theta_{0}$.
	Note that $\phi_{\theta_{0}}=0$ is equivalent to $\theta_{0}=\lambda/\gamma$.
	The OVB bound for $\theta_{0}$ thus depends on the partially identified sets for $(\lambda,\gamma)$ when OVB is present. On the other hand, by showing the possible range of $\theta_{0}$ when considering OVB of $(\lambda,\gamma)$, the OVB bound for $\theta_{0}$ can be established accordingly. 
	
	We proceed the proof by considering different sign
	scenarios for $(\gamma^{-},\gamma^{+})$ and $(\lambda^{-},\lambda^{+})$. We first show how to obtain the OVB bound for $\theta_{0}$ when $(\gamma^{-},\gamma^{+})\in\mathbb{R}^{++}$. In this scenario, $\gamma>0$.
	\begin{itemize}
		\item If $(\lambda^{-},\lambda^{+})\in\mathbb{R}^{++}$, $\lambda>0$ and hence $\theta_{0}>0$. Given $\phi_{\theta_{0}}^{+}\geq0$
		and $\phi_{\theta_{0}}^{-}\leq0$, we have $\lambda^{+}-\gamma^{-}\theta_{0}\geq0$
		and $\lambda^{-}-\gamma^{+}\theta_{0}\leq0$. Therefore $\theta_{0}\in[\lambda^{-}/\gamma^{+},\lambda^{+}/\gamma^{-}]$.
		\item If $(\lambda^{-},\lambda^{+})\in\mathbb{R}^{--}$, $\lambda<0$ which
		implies $\theta_{0}<0$. Again using the bound on $\phi_{\theta_{0}}$, we have $\lambda^{+}-\gamma^{+}\theta_{0}\geq0$
		and $\lambda^{-}-\gamma^{-}\theta_{0}\leq0$. Thus $\theta_{0}\in[\lambda^{-}/\gamma^{-},\lambda^{+}/\gamma^{+}]$.
		\item If $(\lambda^{-},\lambda^{+})$ have different signs, then $\lambda$ is not sign-deterministic. We derive the partially identified sets for $\theta_{0}$
		separately under $\lambda\geq0$ and $\lambda<0$, and then take their union
		as the OVB bound for $\theta_{0}$. From previous results, for $\lambda\geq0$,
		$\theta_{0}\in[\lambda^{-}/\gamma^{+},\lambda^{+}/\gamma^{-}]$; for
		$\lambda<0$, $\theta_{0}\in[\lambda^{-}/\gamma^{-},\lambda^{+}/\gamma^{+}]$.
		Therefore the OVB bound for $\theta_{0}$ in this
		situation is $[\lambda^{-}/\gamma^{+},\lambda^{+}/\gamma^{-}]\cup[\lambda^{-}/\gamma^{-},\lambda^{+}/\gamma^{+}]=[\lambda^{-}/\gamma^{-},\lambda^{+}/\gamma^{-}]$,
		which includes zero since $(\lambda^{-}/\gamma^{-},\lambda^{+}/\gamma^{-})$
		have different signs.
	\end{itemize}
	
	For $(\gamma^{-},\gamma^{+})\in\mathbb{R}^{--}$, 
	since arguments for the proof are very similar as those used in the
	scenario $(\gamma^{-},\gamma^{+})\in\mathbb{R}^{++}$, we omit it for brevity. 
	
	For $(\gamma^{-}, \gamma^{+})=(0,0)$, it can be shown that both the upper and lower OVB bounds for $\theta_{0}$ are undefined. Therefore the scenario is excluded. 
	
	We now turn to the scenario that $(\gamma^{-},\gamma^{+})$ have different signs. This case is more
	complex than those when $(\gamma^{-},\gamma^{+})$ have same sign,
	since (a) $\gamma$ is no longer sign-deterministic, and (b) the interval $[\gamma^{-},\gamma^{+}]$
	includes zero, so we need to separately consider the cases when $\gamma\rightarrow0^{-}$ and $\gamma\rightarrow0^{+}$. We start from the case when both $\gamma^{+}\neq 0$ and $\gamma^{-}\neq 0$. Note that in the following proof, we exclude the case when $\gamma=0$, since $\theta_{0}$ is not defined. 
	\begin{itemize}
		\item If $(\lambda^{-},\lambda^{+})\in\mathbb{R}^{++}$, $\lambda>0$.
		\begin{itemize}
			\item As $\gamma\rightarrow0^{-}$, $\theta_{0}\rightarrow-\infty$; as $\gamma\rightarrow0^{+}$, $\theta_{0}\rightarrow\infty$.
			\item When $\gamma>>0$, $\theta_{0}>0$.
			With $\phi_{\theta_{0}}^{+}\geq0$ and $\phi_{\theta_{0}}^{-}\leq0$,
			we have $\lambda^{+}-\gamma^{-}\theta_{0}\geq0$ and $\lambda^{-}-\gamma^{+}\theta_{0}\leq0$.
			Thus $\theta_{0}\geq\lambda^{-}/\gamma^{+}>0$ (since
			$\lambda^{-}/\gamma^{+}>0>\lambda^{+}/\gamma^{-}$). 
			\item When $\gamma<<0$,
			$\theta_{0}<0$. By using similar arguments, we have $\lambda^{+}-\gamma^{+}\theta_{0}\geq0$
			and $\lambda^{-}-\gamma^{-}\theta_{0}\leq0$. Thus $\theta_{0}\leq\lambda^{-}/\gamma^{-}<0$ (since $\lambda^{+}/\gamma^{+}>0>\lambda^{-}/\gamma^{-}$).
		\end{itemize}
		\item[] Summing the above results, we conclude that $\theta_{0}\in(-\infty,\lambda^{-}/\gamma^{-}]\cup[\lambda^{-}/\gamma^{+},\infty)$.
		\item If $(\lambda^{-},\lambda^{+})\in\mathbb{R}^{--}$, $\lambda<0$.
		\begin{itemize}
			\item As $\gamma\rightarrow0^{-}$ and $\gamma\rightarrow0^{+}$, $\theta_{0}\rightarrow-\infty$
			and $\theta_{0}\rightarrow\infty$. 
			\item When $\gamma>>0$, $\theta_{0}<0$. With the bound on $\phi_{\theta_{0}}$, we have $\lambda^{+}-\gamma^{+}\theta_{0}\geq0$ and $\lambda^{-}-\gamma^{-}\theta_{0}\leq0$. We conclude
			that $\theta_{0}\leq\lambda^{+}/\gamma^{+}<0$ (since
			$\lambda^{+}/\gamma^{+}<0<\lambda^{-}/\gamma^{-}$).
			\item When $\gamma<<0$, $\theta_{0}>0$. By using similar arguments, we have $\lambda^{+}-\gamma^{-}\theta_{0}\geq0$
			and $\lambda^{-}-\gamma^{+}\theta_{0}\leq0$. Therefore $\theta_{0}\geq\lambda^{+}/\gamma^{-}>0$ (since $\lambda^{-}/\gamma^{+}<0<\lambda^{+}/\gamma^{-}$).
		\end{itemize}
		\item[] Summing the above results, we
		conclude that $\theta_{0}\in(-\infty,\lambda^{+}/\gamma^{+}]\cup[\lambda^{+}/\gamma^{-},\infty)$.
		\item Now consider when $(\lambda^{-},\lambda^{+})$ have different signs.
		\begin{itemize}
			\item As $\gamma\rightarrow0^{-}$ and $\gamma\rightarrow0^{+}$, $\theta_{0}\rightarrow-\infty$
			and $\theta_{0}\rightarrow\infty$.
			\item When $\gamma>>0$ and $\lambda\geq 0$, we have $\theta_{0}\geq 0$.
			\item When $\gamma<<0$ and $\lambda\leq 0$, we have $\theta_{0}\geq 0$. 
			\item When $\gamma>>0$ and $\lambda<0$ or $\gamma<<0$ and $\lambda>0$, we have $\theta_{0}< 0$. 
		\end{itemize}
		\item[] Therefore in this scenario, we conclude that 
		\[\theta_{0}\in(-\infty,0]\cup[0,\infty)=(-\infty,\infty).\]
	\end{itemize}
	
	As for one of $\gamma^{+}$ and $\gamma^{-}$ equals to zero, it can be shown that one of the upper and lower OVB bounds for $\theta_{0}$ is undefined. Therefore these scenarios are excluded.
\end{proof}

\subsection{Proof of Theorem 2}
\begin{proof}
	Let $\widehat{\Xi}=(\hat{\lambda}_{s},\hat{\gamma}_{s},\hat{v}_{s}^{2},\hat{\sigma}_{Y_{s}}^{2},\hat{\sigma}_{D_{s}}^{2})$
	be the vector of DML estimators for the short version parameters $\Xi=(\lambda_{s},\gamma_{s},v_{s}^{2},\sigma_{Y_{s}}^{2},\sigma_{D_{s}}^{2})$.
	Let 
	\[
	\boldsymbol{\psi}(W_{s};\overline{\Xi})=\left[\psi_{\bar{\lambda}s},\psi_{\bar{\gamma}_{s}},\psi_{\bar{v}_{s}^{2}},\psi_{\bar{\sigma}_{Y_{s}}^{2}},\psi_{\bar{\sigma}_{D_{s}}^{2}}\right]^{\top},
	\]
	where $\overline{\Xi}=(\bar{\lambda}_{s},\bar{\gamma}_{s},\bar{v}_{s}^{2},\bar{\sigma}_{Y_{s}}^{2},\bar{\sigma}_{D_{s}}^{2})$.
	If Assumption 4.2 (for PLIVM) or Assumption 5.2 (for LATE) in \citet{CCDDHNR_2018} holds, it can be shown that 
	\[
	\sqrt{n}(\widehat{\Xi}-\Xi)\overset{d}{\rightarrow}N(0,\boldsymbol{\Omega}),
	\]
	where $\boldsymbol{\Omega}=\mathbf{J}_{0}^{-1}E[\boldsymbol{\psi}(W_{s};\Xi)\boldsymbol{\psi}(W_{s};\Xi)^{\top}]\mathbf{J}_{0}^{-1}$
	is the approximate covariance matrix, and $\mathbf{J}_{0}:=\partial E[\boldsymbol{\psi}(W_{s};\overline{\Xi})]/\partial\overline{\Xi}^{\top}|_{\overline{\Xi}=\Xi}$ is the Jacobian matrix. We now derive the approximate variances of
	$\hat{\phi}_{t}^{+}$ and $\hat{\phi}_{t}^{-}$ under the assumptions
	that $(t,\rho_{Y},\rho_{D},C_{\alpha},C_{Y},C_{\alpha})$ are all
	fixed. Since $\phi_{t}^{+}$ and $\phi_{t}^{-}$ are linear functions
	of $(\lambda^{+},\gamma^{+},\gamma^{-})$ and $(\lambda^{-},\gamma^{+},\gamma^{-})$, the influence functions (IFs) of $\phi_{t}^{+}$ and $\phi_{t}^{-}$
	are given by: 
	\begin{eqnarray*}
		\psi_{\phi_{t}^{+}} & = & \psi_{\lambda^{+}}-\psi_{\gamma^{-}}t\mathbf{1}\{t\geq0\}-\psi_{\gamma^{+}}t\mathbf{1}\{t<0\}\\
		& = & \psi_{\lambda s}+\zeta_{Y,\alpha}\psi_{S_{Y}}-(\psi_{\gamma_{s}}-\zeta_{D,\alpha}\psi_{S_{D}})t\mathbf{1}\{t\geq0\}-(\psi_{\gamma_{s}}+\zeta_{D,\alpha}\psi_{S_{D}})t\mathbf{1}\{t<0\}\\
		& = & \psi_{\lambda s}-\psi_{\gamma_{s}}t+(\zeta_{Y,\alpha}\psi_{S_{Y}}+\zeta_{D,\alpha}\psi_{S_{D}}|t|)\\
		& = & \mathbf{C}_{t}^{+\top}\boldsymbol{\psi}(W_{s};\Xi),
	\end{eqnarray*}
	and 
	\begin{eqnarray*}
		\psi_{\phi_{t}^{-}} & = & \psi_{\lambda^{-}}-\psi_{\gamma^{+}}t\mathbf{1}\{t\geq0\}-\psi_{\gamma^{-}}t\mathbf{1}\{t<0\}\\
		& = & \psi_{\lambda s}-\zeta_{Y,\alpha}\psi_{S_{Y}}-(\psi_{\gamma_{s}}+\zeta_{D,\alpha}\psi_{S_{D}})t\mathbf{1}\{t\geq0\}-(\psi_{\gamma_{s}}-\zeta_{D,\alpha}\psi_{S_{D}})t\mathbf{1}\{t<0\}\\
		& = & \psi_{\lambda s}-\psi_{\gamma_{s}}t-(\zeta_{Y,\alpha}\psi_{S_{Y}}+\zeta_{D,\alpha}\psi_{S_{D}}|t|)\\
		& = & \mathbf{C}_{t}^{-\top}\boldsymbol{\psi}(W_{s};\Xi),
	\end{eqnarray*}
	where 
	\begin{eqnarray*}
		\mathbf{C}_{t}^{+} & = & \left[1,-t,\left(\frac{\zeta_{Y,\alpha}\sigma_{Y_{s}}}{2v_{s}}+\frac{\zeta_{D,\alpha}\sigma_{D_{s}}|t|}{2v_{s}}\right),\frac{\zeta_{Y,\alpha}v_{s}}{2\sigma_{Y_{s}}},\frac{\zeta_{D,\alpha}v_{s}|t|}{2\sigma_{D_{s}}}\right]^{\top},\\
		\mathbf{C}_{t}^{-} & = & \left[1,-t,-\left(\frac{\zeta_{Y,\alpha}\sigma_{Y_{s}}}{2v_{s}}+\frac{\zeta_{D,\alpha}\sigma_{D_{s}}|t|}{2v_{s}}\right),-\frac{\zeta_{Y,\alpha}v_{s}}{2\sigma_{Y_{s}}},-\frac{\zeta_{D,\alpha}v_{s}|t|}{2\sigma_{D_{s}}}\right]^{\top}.
	\end{eqnarray*}
	
	It also can be shown that $\phi_{t}^{+}=\mathbf{C}_{t}^{+}\Xi$ and $\phi_{t}^{-}=\mathbf{C}_{t}^{-}\Xi$.
	With these results, the approximate variances of $\sqrt{n}(\hat{\phi}_{t}^{+}-\phi_{t}^{+})$
	and $\sqrt{n}(\hat{\phi}_{t}^{-}-\phi_{t}^{-})$ are $\mathbf{C}_{t}^{+\top}\boldsymbol{\Omega}\mathbf{C}_{t}^{+}$
	and $\mathbf{C}_{t}^{-\top}\boldsymbol{\Omega}\mathbf{C}_{t}^{-}$.
	If Assumption 4.2 (for PLIVM) or Assumption 5.2 (for LATE) in \citet{CCDDHNR_2018}
	holds, then we have: 
	\[
	\sqrt{n}(\hat{\phi}_{t}^{+}-\phi_{t}^{+})\overset{d}{\rightarrow}N\left(0,\mathbf{C}_{t}^{+\top}\boldsymbol{\Omega}\mathbf{C}_{t}^{+}\right),\sqrt{n}(\hat{\phi}_{t}^{-}-\phi_{t}^{-})\overset{d}{\rightarrow}N\left(0,\mathbf{C}_{t}^{-\top}\boldsymbol{\Omega}\mathbf{C}_{t}^{-}\right).
	\]\end{proof}
\subsection{Proof of Theorem 3}
\begin{proof}
	Using the result in Theorem 2, we immediately have
	$\lim_{n\rightarrow\infty}P(\hat{\phi}_{t,1-\tau}^{+}\geq\phi_{t}^{+})\geq1-\tau$
	and $\lim_{n\rightarrow\infty}P(\hat{\phi}_{t,\tau}^{-}\leq\phi_{t}^{-})\geq1-\tau$. Since $\phi_{t}\in[\phi_{t}^{+},\phi_{t}^{-}]$, the following also hold:
	\begin{equation}
		\lim_{n\rightarrow\infty}P(\hat{\phi}_{t,1-\tau}^{+}\geq\phi_{t}^{+})\geq1-\tau,
		\lim_{n\rightarrow\infty}P(\hat{\phi}_{t,\tau}^{-}\leq \phi_{t})\geq1-\tau.\label{pf_T3}
	\end{equation}
	If $\theta=\theta_{0}$, $\phi_{\theta_{0}}=0$ and (\ref{pf_T3}) becomes 
	\[
	\lim_{n\rightarrow\infty}P(\hat{\phi}_{t,1-\tau}^{+}\geq 0)\geq 1-\tau, \lim_{n\rightarrow\infty}P(\hat{\phi}_{t,\tau}^{-}\leq 0)\geq1-\tau,
	\]which is equivalent to 
	\[
	P\left(\theta_{0}\in\left\{t\in\boldsymbol{\Theta}_0: \hat{\phi}_{t,1-\tau}^{+}\geq0 \right\} \right) \geq 1-\tau,	P\left(\theta_{0}\in\left\{t\in\boldsymbol{\Theta}_0:\hat{\phi}_{t,\tau}^{-}\leq0\right\} \right)\geq 1-\tau.
	\]
\end{proof}
\subsection{Proof of Theorem 4}
\begin{proof}
	For $[\lambda^{+},\lambda^{-}]$, since $\lambda^{-}\leq\lambda^{+}$
	\begin{align*}
		P([\lambda^{+},\lambda^{-}]\in\text{CI}_{1-2\tau}^{[\lambda^{-},\lambda^{+}]})= & P(\{ \lambda^{-}\geq\hat{\lambda}_{\tau}^{-}\} \cap\{ \lambda^{+}\leq\hat{\lambda}_{1-\tau}^{+}\})\\
		= & 1-P(\{ \lambda^{-}<\hat{\lambda}_{\tau}^{-}\} \cup\{ \lambda^{+}>\hat{\lambda}_{1-\tau}^{+}\})\\
		\geq & 1-P(\lambda^{-}<\hat{\lambda}_{\tau}^{-})-P(\lambda^{+}>\hat{\lambda}_{1-\tau}^{+})\\
		= & 1-(1-P(\lambda^{-}\geq\hat{\lambda}_{\tau}^{-}))-(1-P(\lambda^{+}\leq\hat{\lambda}_{1-\tau}^{+}))\\
		\rightarrow & 1-2\tau
	\end{align*}as $n\rightarrow\infty$ by using the one-sided covering properties
	in \citet{CCNSS_2024}. The same argument can be applied to
	prove the case of $[\gamma^{-},\gamma^{+}]$. For $[\phi_{t}^{-},\phi_{t}^{+}]$,
	invoking Theorem 3 and using similar argument above yields:
	\begin{align*}
		P\left(\left[\hat{\phi}_{t,\tau}^{-},\hat{\phi}_{t,1-\tau}^{+}\right]\in\text{CI}_{1-2\tau}^{[\phi_{t}^{-},\phi_{t}^{+}]}\right)\geq & 1-P\left(\phi_{t}^{-}<\hat{\phi}_{t,\tau}^{-}\right)-P\left(\phi_{t}^{+}>\hat{\phi}_{t,1-\tau}^{+}\right)\\
		= & 1-\left(1-P\left( \phi_{t}^{-}\geq\hat{\phi}_{t,\tau}^{-} \right)\right)-\left(1-P\left( \phi_{t}^{-}\leq\hat{\phi}_{t,\tau}^{-}\right)\right)\\
		\rightarrow & 1-2\tau
	\end{align*}
	as $n\rightarrow\infty$. Furthermore, since $\lambda\in[\lambda^{+},\lambda^{-}]$,
	\[
	P(\lambda^{-}\geq\hat{\lambda}_{\tau}^{-})\leq P(\lambda\geq\hat{\lambda}_{\tau}^{-}), P(\lambda^{+}\leq\hat{\lambda}_{1-\tau}^{+})\leq P(\lambda\leq\hat{\lambda}_{1-\tau}^{+}).
	\]We can conclude that
	$P([\lambda^{-},\lambda^{+}]\in\text{CI}_{1-2\tau}^{[\lambda^{-},\lambda^{+}]})\leq P(\lambda\in\text{CI}_{1-2\tau}^{[\lambda^{-},\lambda^{+}]})$
	(see also Lemma 1 of \citet{IM_2004}). A similar result holds for
	$\gamma$ and $\phi_{t}$. 
\end{proof}

\section{Derivations for $E[g_{Ys}(\alpha-\alpha_s)]=0$ and $E[g_{Ds}(\alpha-\alpha_s)]=0$}
Key conditions to arrive (\ref{OVB_lambda}) and (\ref{OVB_gamma}) are $E[g_{Ys}(\alpha-\alpha_{s})]=0$
and $E[g_{Ds}(\alpha-\alpha_{s})]=0$. It is easy to see that if $E[\alpha(W)|W_{s}]=\alpha_{s}(W_s)$, the two key conditions also hold. $E[\alpha(W)|W_{s}]=\alpha_{s}$ holds
for LATE and LATT. For LATE, 
\[
\alpha(W)=\frac{Z}{\pi(X,A)}-\frac{1-Z}{1-\pi(X,A)},\alpha_{s}(W_{s})=\frac{Z}{\pi(X)}-\frac{1-Z}{1-\pi(X)}.
\]
The first term of $E[\alpha(W)|W_s]$ is
\begin{align}
	E\left[\frac{Z}{\pi(X,A)}|W_{s}\right] & =E\left[\frac{ZP(X,A)}{P(Z=1,X,A)}|Z,X\right].\label{LATE_alpha}
\end{align}
When $Z=1$, 
\begin{align*}
	E\left[\frac{ZP(X,A)}{P(Z=1,X,A)}|Z=1,X\right] & =\int\frac{P(X,a)}{P(Z=1,X,a)}f_{A}(a|Z=1,X)da=\frac{1}{\pi(X)},
\end{align*}which equals to the first term of $\alpha_s(W_s)$ when $Z=1$. When $Z=0$, the conditional expectation (\ref{LATE_alpha}) and $Z/\pi(X)$ are both zero. Therefore we can conclude that 
\[
E\left[\frac{Z}{\pi(X,A)}|W_{s}\right]=\frac{Z}{\pi(X)}.
\]
Using similar arguments, for the second term of $E[\alpha(W)|W_s]$, we also have
\[
E\left[\frac{1-Z}{1-\pi(X,A)}|W_{s}\right]=\frac{1-Z}{1-\pi(X)}.
\]
Therefore we conclude $E[\alpha(W)|W_{s}]=\alpha_{s}(W_{s})$ holds
for LATE. 

For LATT, 
\[
\alpha(W)=Z-\frac{\pi(X,A)}{1-\pi(X,A)}(1-Z),\alpha_{s}(W_{s})=Z-\frac{\pi(X)}{1-\pi(X)}(1-Z).
\]
Then
\begin{align*}
	E[\alpha(W)|W_{s}] & =Z-E\left[\frac{\pi(X,A)}{1-\pi(X,A)}(1-Z)|Z,X\right].
\end{align*}
If $Z=1$, $E[\alpha(W)|Z=1,X]=1=\alpha_{s}(W_{s})$. If $Z=0$, 
\begin{align*}
	E[\alpha(W)|W_{s}] & =-E\left[\frac{\pi(X,A)}{1-\pi(X,A)}|Z=0,X\right]\\
	& =-\int\frac{P(Z=1,X,a)}{P(Z=0,X,a)}f_{A|Z,X}(a|Z=0,X)da\\
	& =-\int\frac{P(Z=1,X,a)}{P(Z=0,X)}da\\
	& =-\frac{\pi(X)}{1-\pi(X)}=\alpha_{s}(W_{s}).
\end{align*}
Therefore we conclude that $E[\alpha(W)|W_{s}]=\alpha_{s}(W_s)$ holds
for LATT.

However, $E[\alpha|W_{s}]=\alpha_{s}$ does not hold for PLIVM. To
show that $E[g_{Ys}(\alpha-\alpha_{s})]=0$ and $E[g_{Ds}(\alpha-\alpha_{s})]=0$
still hold for PLIVM, we directly calculate these expectations.
Following \citet{CCNSS_2024}, we define the short version of $g_Y(W)$ as
\[
g_{Ys}(W_s)  :=  \lambda_{s}Z+k_{s}(X),
\]
where $k_{s}(X)=\theta_{s}h_{s}(X)+f_{s}(X)$. Note that
\[
\alpha-\alpha_{s} =\frac{\sigma_{Z_{s}}^{2}(Z-E[Z|X,A])-\sigma_{Z}^{2}(Z-E[Z|X])}{\sigma_{Z}^{2}\sigma_{Z_{s}}^{2}},
\]where 
$\sigma_{Z}^{2}=E[(Z-E[Z|X,A])^{2}]$ and $\sigma_{Z_{s}}^{2}=E[(Z-E[Z|X])^{2}]$.
Next,
\begin{align*}
	E[Z(\alpha-\alpha_{s})]= & \frac{\sigma_{Z_{s}}^{2}E[Z(Z-E[Z|X,A])]-\sigma_{Z}^{2}E[Z(Z-E[Z|X])]}{\sigma_{Z}^{2}\sigma_{Z_{s}}^{2}}\\
	= & \frac{\sigma_{Z_{s}}^{2}E[Z^{2}-(E[Z|X,A])^{2}]-\sigma_{Z}^{2}E[Z^{2}-(E[Z|X])^{2}]}{\sigma_{Z}^{2}\sigma_{Z_{s}}^{2}}\\
	= & \frac{\sigma_{Z_{s}}^{2}\sigma_{Z}^{2}-\sigma_{Z}^{2}\sigma_{Z_{s}}^{2}}{\sigma_{Z}^{2}\sigma_{Z_{s}}^{2}} = 0.
\end{align*}Also,
\begin{align*}
	E[k_{s}(X)(\alpha-\alpha_{s})]= & \frac{\sigma_{Z_{s}}^{2}E[k_{s}(X)(Z-E[Z|X,A])]-\sigma_{Z}^{2}E[k_{s}(X)(Z-E[Z|X])]}{\sigma_{Z}^{2}\sigma_{Z_{s}}^{2}}\\
	= & \frac{\sigma_{Z_{s}}^{2}E[E[k_{s}(X)(Z-E[Z|X,A])|X]]}{\sigma_{Z}^{2}\sigma_{Z_{s}}^{2}}-\\
	& \frac{\sigma_{Z}^{2}E[E[k_{s}(X)(Z-E[Z|X])|X]]}{\sigma_{Z}^{2}\sigma_{Z_{s}}^{2}}\\
	= & \frac{\sigma_{Z_{s}}^{2}E[k_{s}(X)E[(Z-E[Z|X,A])|X]]}{\sigma_{Z}^{2}\sigma_{Z_{s}}^{2}}-\\
	& \frac{\sigma_{Z}^{2}E[k_{s}(X)E[(Z-E[Z|X])|X]]}{\sigma_{Z}^{2}\sigma_{Z_{s}}^{2}}\\
	= & 0.
\end{align*}
Combining the above results, we conclude that $E[g_{Ys}(\alpha-\alpha_{s})]=0$
for PLIVM. Similar arguments can be applied for proving $E[g_{Ds}(\alpha-\alpha_{s})]=0$ with $g_{Ds}(W_s):=\gamma_s Z +h_s(X)$. Finally, in this case, $E[(\alpha-\alpha_s)^2] = E[\alpha^2] - E[\alpha_{s}^2]$ also holds since
\begin{align*}
	E[\alpha_{s}(\alpha-\alpha_{s})]	&=E\left[\frac{Z-E[Z|X]}{\sigma_{Z_{s}}^{2}}\times\frac{\sigma_{Z_{s}}^{2}(Z-E[Z|X,A])-\sigma_{Z}^{2}(Z-E[Z|X])}{\sigma_{Z}^{2}\sigma_{Z_{s}}^{2}}\right]\\
	&=\frac{\sigma_{Z_{s}}^{2}E[(Z-E[Z|X])(Z-E[Z|X,A])]-\sigma_{Z}^{2}\sigma_{Z_{s}}^{2}}{\sigma_{Z}^{2}\sigma_{Z_{s}}^{4}}\\
	&=\frac{\sigma_{Z_{s}}^{2}E[Z-(E[Z|X,A])^{2}]-\sigma_{Z}^{2}\sigma_{Z_{s}}^{2}}{\sigma_{Z}^{2}\sigma_{Z_{s}}^{4}}\\
	&=\frac{\sigma_{Z_{s}}^{2}\sigma_{Z}^{2}-\sigma_{Z}^{2}\sigma_{Z_{s}}^{2}}{\sigma_{Z}^{2}\sigma_{Z_{s}}^{4}}=0
\end{align*}
\clearpage
\begin{table}
	\caption{Maximum estimates of $C_{\alpha}$, $C_{Y}$ and $C_{D}$ using the benchmark analysis with $k_{\alpha}= k_Y = k_D = 1$}
	\centering %
	\begin{tabular}{rcccccc}
		\hline 
		&  & \multicolumn{2}{c}{LATE} &  & \multicolumn{2}{c}{LATT}\tabularnewline
		\cline{3-4} \cline{4-4} \cline{6-7} \cline{7-7} 
		&  & $X_{j}^{*}$  & Est. &  & $X_{j}^{*}$  & Est.\tabularnewline
		\hline 
		& $C_{\alpha}$  & \texttt{age}  & 0.138 &  & \texttt{age}  & 0.195\tabularnewline
		Male  & $C_{Y}$  & \texttt{wkless13}  & 0.147 &  & \texttt{wkless13}  & 0.147\tabularnewline
		& $C_{D}$  & \texttt{hsorged}  & 0.043 &  & \texttt{hsorged}  & 0.043\tabularnewline
		&  &  &  &  &  & \tabularnewline
		& $C_{\alpha}$  & \texttt{wkless13}  & 0.079 &  & \texttt{wkless13}  & 0.106\tabularnewline
		Female  & $C_{Y}$  & \texttt{wkless13}  & 0.181 &  & \texttt{wkless13}  & 0.181\tabularnewline
		& $C_{D}$  & \texttt{hsorged}  & 0.059 &  & \texttt{hsorged}  & 0.059\tabularnewline
		\hline 
	\end{tabular}\label{table1} 
\end{table}

\begin{table}
	\centering 
	\caption{OVB analysis results of the LATE for male and female workers. For both groups, we set $|\rho_{Y}|=|\rho_{D}|=1$. The maximum estimates of the sensitivity parameters shown in Table \ref{table1} are calibrated to generate the result. 
	}
	\begin{tabular}{lccccccc}
		\hline 
		& \multicolumn{7}{c}{Male}\tabularnewline
		\cline{2-8} \cline{3-8} \cline{4-8} \cline{5-8} \cline{6-8} \cline{7-8} \cline{8-8} 
		& Est.  &  & C.I. (95\%)  &  & OVB Bound Est.  &  & $[\text{Low}_{0.025},\text{Up}_{0.975}]$\tabularnewline
		\hline 
		$\theta_0$ (LATE)& 1,664.55  &  & [-186.90, 3,516.00]  &  & [317.62, 3,043.88]  &  & [-1,551.26, 4,916.20]\tabularnewline
		$\lambda$  & 1,023.02  &  & [-114.99, 2,161.03]  &  & [196.40, 1,849.64]  &  & [-940.76, 2989.13]\tabularnewline
		$\gamma$  & 0.61  &  & [0.60, 0.63]  &  & [0.61, 0.62]  &  & [0.59, 0.64]\tabularnewline
		\hline 
		& \multicolumn{7}{c}{Female}\tabularnewline
		\cline{2-8} \cline{3-8} \cline{4-8} \cline{5-8} \cline{6-8} \cline{7-8} \cline{8-8} 
		& Est.  &  & C.I. (95\%)  &  & OVB Bound Est. &  & $[\text{Low}_{0.025},\text{Up}_{0.975}]$\tabularnewline
		\hline 
		$\theta_0$ (LATE) & 1,900.10  &  & [816.14, 2,984.05]  &  & [1,279.58, 2,530.10]  &  & [198.50, 3,625.84]\tabularnewline
		$\lambda$  & 1,231.73  &  & [525.99, 1,937.47]  &  & [834.50, 1,628.96]  &  & [129.28, 2,335.42]\tabularnewline
		$\gamma$  & 0.65  &  & [0.63, 0.66]  &  & [0.64, 0.65]  &  & [0.63, 0.67]\tabularnewline
		\hline 
	\end{tabular}\label{table2} 
\end{table}

\begin{table}
	\caption{OVB analysis results of the LATT for male and female workers. For both groups, we set $|\rho_{Y}|=|\rho_{D}|=1$. The maximum estimates of the sensitivity parameters shown in Table \ref{table1} are calibrated to generate the result. 
	}
	\centering %
	\begin{tabular}{lccccccc}
		\hline 
		& \multicolumn{7}{c}{Male }\tabularnewline
		\cline{2-8} \cline{3-8} \cline{4-8} \cline{5-8} \cline{6-8} \cline{7-8} \cline{8-8} 
		& Est.  &  & C.I. (95\%)  &  & OVB Bound Est. &  & $[\text{Low}_{0.025},\text{Up}_{0.975}]$\tabularnewline
		\hline 
		$\theta_0$ (LATT) & 1,634.10  &  & [-270.03, 3,538.24]  &  & [-300.71, 3,606.59]  &  & [-2,244.03, 5,546.17]\tabularnewline
		$\lambda$  & 1,002.25  &  & [-173.57, 2,178.08]  &  & [-182.34, 2,186.85]  &  & [-1,358.01, 3,364.23]\tabularnewline
		$\gamma$  & 0.61  &  & [0.60, 0.63]  &  & [0.61, 0.62]  &  & [0.59, 0.64]\tabularnewline
		\hline 
		& \multicolumn{7}{c}{Female}\tabularnewline
		\cline{2-8} \cline{3-8} \cline{4-8} \cline{5-8} \cline{6-8} \cline{7-8} \cline{8-8} 
		& Est.  &  & C.I. (95\%)  &  & OVB Bound Est. &  & $[\text{Low}_{0.025},\text{Up}_{0.975}]$\tabularnewline
		\hline 
		$\theta_0$ (LATT) & 1,993.21  &  & [879.80, 3,106.62]  &  & [1,150.82, 2,851.74]  &  & [46.66, 3,977.10]\tabularnewline
		$\lambda$  & 1,292.02  &  & [569.19, 2,014.84]  &  & [752.25, 1,831.78]  &  & [30.45, 2,556.01]\tabularnewline
		$\gamma$  & 0.65  &  & [0.63, 0.66]  &  & [0.64, 0.65]  &  & [0.63, 0.67]\tabularnewline
		\hline 
	\end{tabular}\label{table3} 
\end{table}
\clearpage
\begin{table}\caption{Results of the OVB-adjusted confidence intervals of the LATE for male and female workers. The maximum estimates of the sensitivity parameters shown in Table \ref{table1} are calibrated to generate the result. For $\theta_0$ (LATE), $z_{l}^{*}$, $z_{u}^{*}$, $\hat{\Delta}^{*}$ and Min. Obj. are shown in averages obtained from solving the constrained minimization problem of Stoye's shrinkage method for $\phi_t$.}
\centering	
\begin{tabular}{lccccccccc}
	\hline 
	&  & \multicolumn{8}{c}{Male}\tabularnewline
	\cline{3-10} \cline{4-10} \cline{5-10} \cline{6-10} \cline{7-10} \cline{8-10} \cline{9-10} \cline{10-10} 
	&  & OVB-adj. C.I. (95\%) &  & $z_{l}^{*}$ & $z_{u}^{*}$ &  & $\hat{\Delta}^{*}$ &  & Min. Obj.\tabularnewline
	\hline 
	$\theta_0$ (LATE) &  & [-1,249.34, 4,615.08] &  & 1.64 & 1.64 &  & 1,679.98 &  & 136,610.49\tabularnewline
	$\lambda$ &  & [-757.95, 2805.95] &  & 1.64 & 1.64 &  & 1,653.25 &  & 136,474.56\tabularnewline
	$\gamma$ &  & [0.59, 0.64] &  & 1.96 & 1.96 &  & 0.00 &  & 2.52\tabularnewline
	\hline 
	&  & \multicolumn{8}{c}{Female}\tabularnewline
	\cline{3-10} \cline{4-10} \cline{5-10} \cline{6-10} \cline{7-10} \cline{8-10} \cline{9-10} \cline{10-10} 
	&  & OVB-adj. C.I. (95\%) &  & $z_{l}^{*}$ & $z_{u}^{*}$ &  & $\hat{\Delta}^{*}$ &  & Min. Obj.\tabularnewline
	\hline 
	$\theta_0$ (LATE) &  & [372.18, 3,449.81] &  & 1.65 & 1.65 &  & 811.15 &  & 92,697.95\tabularnewline
	$\lambda$ &  & [242.46, 2,222.04] &  & 1.65 & 1.65 &  & 794.46 &  & 92,576.32\tabularnewline
	$\gamma$ &  & [0.63, 0.67] &  & 1.96 & 1.96 &  & 0.00 &  & 2.50\tabularnewline
	\hline 
\end{tabular}\label{table4}
\end{table}

\begin{table}\caption{Results of the OVB-adjusted confidence intervals of the LATT for male and female workers. The maximum estimates of the sensitivity parameters shown in Table \ref{table1} are calibrated to generate the result. For $\theta_0$ (LATT), $z_{l}^{*}$, $z_{u}^{*}$, $\hat{\Delta}^{*}$ and Min. Obj. are shown in averages obtained from solving the constrained minimization problem of Stoye's shrinkage method for $\phi_t$.}
\centering
\begin{tabular}{lccccccccc}
	\hline 
	&  & \multicolumn{8}{c}{Male}\tabularnewline
	\cline{3-10} \cline{4-10} \cline{5-10} \cline{6-10} \cline{7-10} \cline{8-10} \cline{9-10} \cline{10-10} 
	&  & OVB-adj. C.I. (95\%) &  & $z_{l}^{*}$ & $z_{u}^{*}$ &  & $\hat{\Delta}^{*}$ &  & Min. Obj.\tabularnewline
	\hline 
	$\theta_0$ (LATT) &  & [-1,930.97, 5,234.14] &  & 1.64 & 1.64 &  & 2,415.14 &  & 141,246.66\tabularnewline
	$\lambda$ &  & [-1,168.99, 3,174.93] &  & 1.64 & 1.64 &  & 2,369.18 &  & 141,052.63\tabularnewline
	$\gamma$ &  & {[}0.59, 0.64{]} &  & 1.65 & 1.65 &  & 0.02 &  & 2.14\tabularnewline
	\hline 
	&  & \multicolumn{8}{c}{Female}\tabularnewline
	\cline{3-10} \cline{4-10} \cline{5-10} \cline{6-10} \cline{7-10} \cline{8-10} \cline{9-10} \cline{10-10} 
	&  & OVB-adj. C.I. (95\%) &  & $z_{l}^{*}$ & $z_{u}^{*}$ &  & $\hat{\Delta}^{*}$ &  & Min. Obj.\tabularnewline
	\hline 
	$\theta_0$ (LATT) &  & [137.51, 3884.57] &  & 1.64 & 1.64 &  & 1,103.24 &  & 94,963.22\tabularnewline
	$\lambda$ &  & [89.76, 2496.51] &  & 1.64 & 1.64 &  & 1,079.52 &  & 94,801.09\tabularnewline
	$\gamma$ &  & [0.62, 0.67] &  & 1.96 & 1.96 &  & 0.00 &  & 2.54\tabularnewline
	\hline 
\end{tabular}\label{table5}
\end{table}

\clearpage
\begin{figure}
	\includegraphics[width = 8cm, height = 8cm]{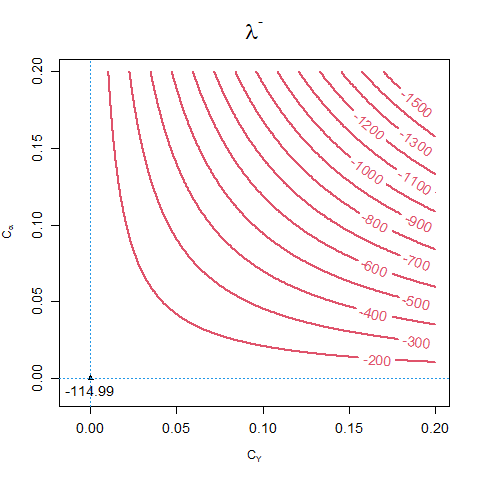}
	\includegraphics[width = 8cm, height = 8cm]{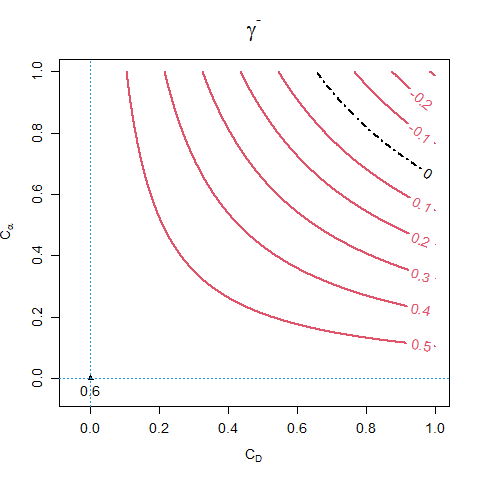}
	\caption{Sensitivity contour plots of $\lambda^{-}$ (left panel) and $\gamma^{-}$ (right panel) for the LATE of male workers. The figures show lower bounds of the $(1-\tau)$ confidence intervals for $\lambda^{-}$ and $\gamma^{-}$. We set $\tau = 0.025$ and $|\rho_Y|=|\rho_D|=1$. The cut-off point of $C_{\alpha}C_D$ is 0.66.}
	\label{figure1}
\end{figure}

\begin{figure}
	\centering
	\includegraphics[width = 8cm, height = 8cm]{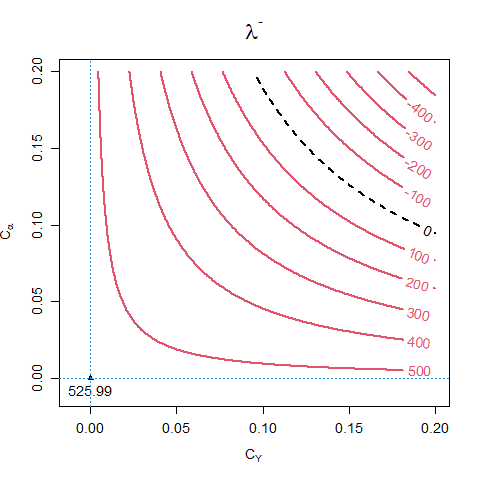}
	\includegraphics[width = 8cm, height = 8cm]{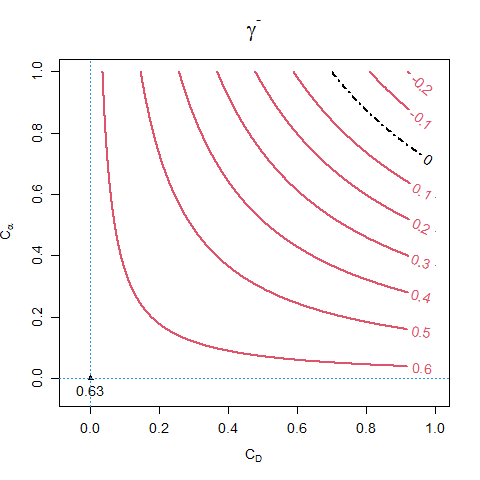}
	\caption{Sensitivity contour plots of $\lambda^{-}$ (left panel) and $\gamma^{-}$ (right panel) for the LATE of female workers. The figures show lower bounds of the $(1-\tau)$ confidence intervals for $\lambda^{-}$ and $\gamma^{-}$. We set $\tau = 0.025$ and $|\rho_Y|=|\rho_D|=1$. The cut-off points of $C_{\alpha}C_Y$ and $C_{\alpha}C_D$ are 0.019 and 0.701.}
	\label{figure2}
\end{figure}

\begin{figure}
	\includegraphics[width = 8cm, height = 8cm]{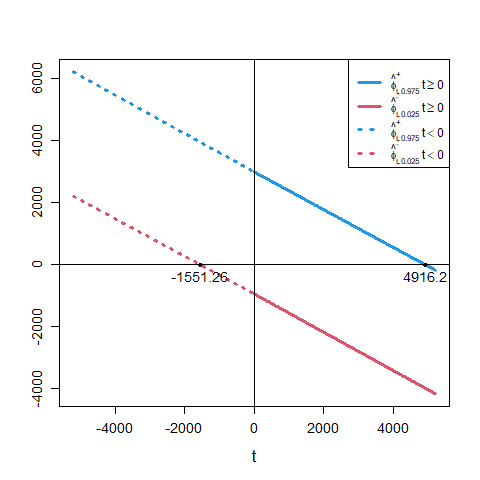}
	\includegraphics[width = 8cm, height = 8cm]{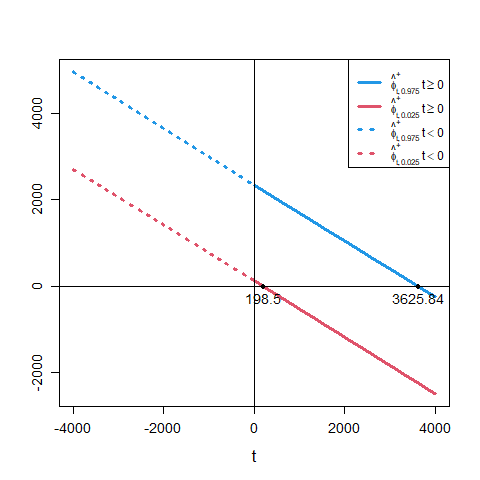}
	\caption{Plots of $\hat{\phi}_{t,1-\tau}^{+}$ and $\hat{\phi}_{t,\tau}^{-}$ for the LATE of male (left panel) and female (right panel) workers. We set $\tau = 0.025$ and $|\rho_{Y}|=|\rho_{D}|=1$. The maximum estimates of the sensitivity parameters shown in Table \ref{table1} are calibrated to generate the result. 
	}
	\label{figure3}
\end{figure}
\clearpage

\begin{figure}
	\includegraphics[width = 8cm, height = 8cm]{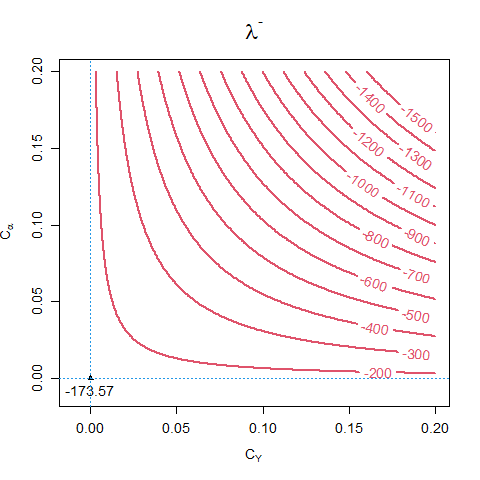}
	\includegraphics[width = 8cm, height = 8cm]{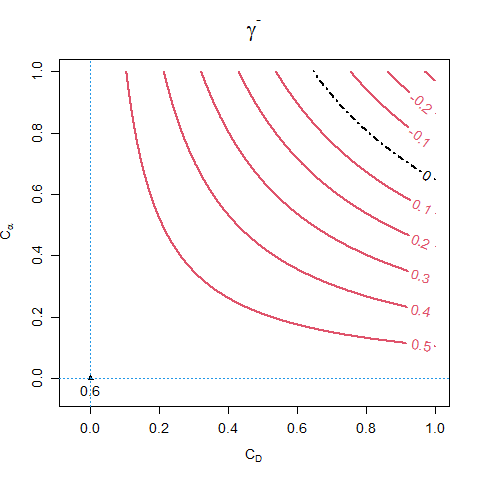}
	\caption{Sensitivity contour plots of $\lambda^{-}$ (left panel) and $\gamma^{-}$ (right panel) for the LATT of male workers. The figures show lower bounds of the $(1-\tau)$ confidence intervals for $\lambda^{-}$ and $\gamma^{-}$. We set $\tau = 0.025$ and $|\rho_Y|=|\rho_D|=1$. The cut-off point of 
		$C_{\alpha}C_D$ is 0.648}
\label{figure4}
\end{figure}

\begin{figure}
\includegraphics[width = 8cm, height = 8cm]{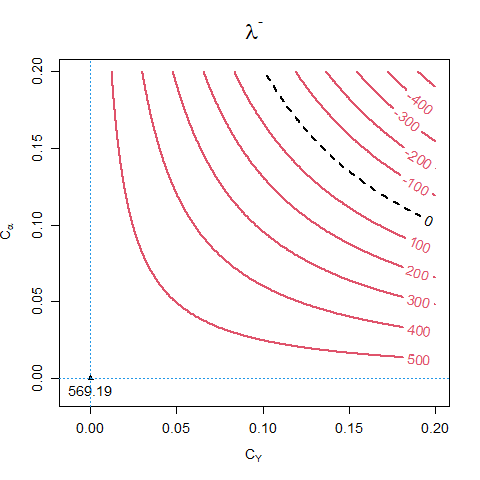}
\includegraphics[width = 8cm, height = 8cm]{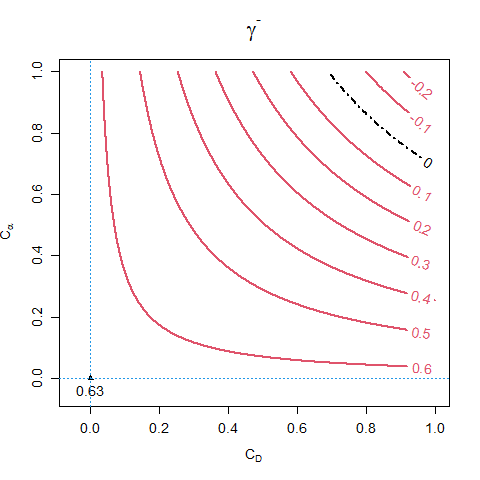}
\caption{Sensitivity contour plots of $\lambda^{-}$ (left panel) and $\gamma^{-}$ (right panel) for the LATT of female workers. The figures show lower bounds of the $(1-\tau)$ confidence intervals for $\lambda^{-}$ and $\gamma^{-}$. We set $\tau = 0.025$ and $|\rho_Y|=|\rho_D|=1$. The cut-off points of $C_{\alpha}C_Y$ and $C_{\alpha}C_D$ are 0.020 and 0.691.}
\label{figure5}
\end{figure}

\begin{figure}
\includegraphics[width = 8cm, height = 8cm]{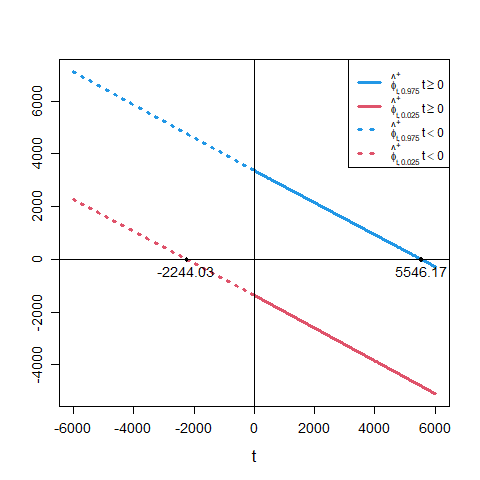}
\includegraphics[width = 8cm, height = 8cm]{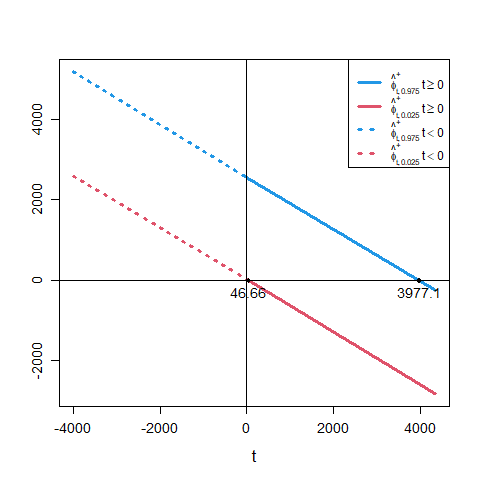}
\caption{Plots of $\hat{\phi}_{t,1-\tau}^{+}$ and $\hat{\phi}_{t,\tau}^{-}$ for the LATT of male (left) and female (right) workers. We set $\tau = 0.025$ and $|\rho_{Y}|=|\rho_{D}|=1$. The maximum estimates of the sensitivity parameters shown in Table \ref{table1} are calibrated to generate the result. 
}
\label{figure6}
\end{figure}
\clearpage

\begin{figure}
	\includegraphics[width = 8cm, height = 8cm]{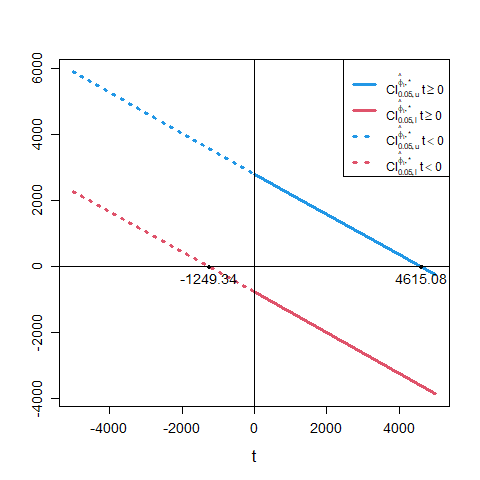}
	\includegraphics[width = 8cm, height = 8cm]{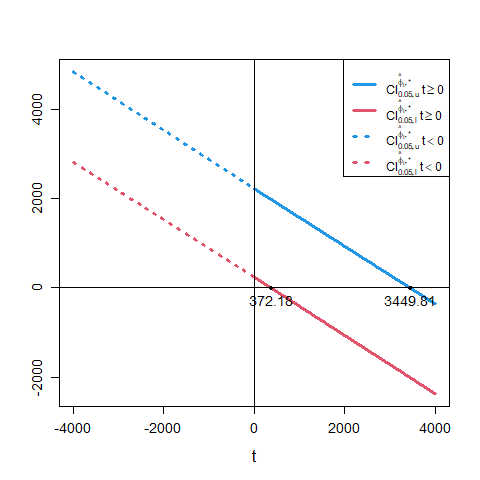}
	\caption{Plots of upper and lower bounds of $\text{CI}_{1-\tau}^{\phi_t,*}$ for the LATE of male (left) and female (right) workers. We set $\tau = 0.05$ and $|\rho_{Y}|=|\rho_{D}|=1$. The maximum estimates of the sensitivity parameters shown in Table \ref{table1} are calibrated to generate the result. 
	}
	\label{figure7}
\end{figure}
\begin{figure}
	\includegraphics[width = 8cm, height = 8cm]{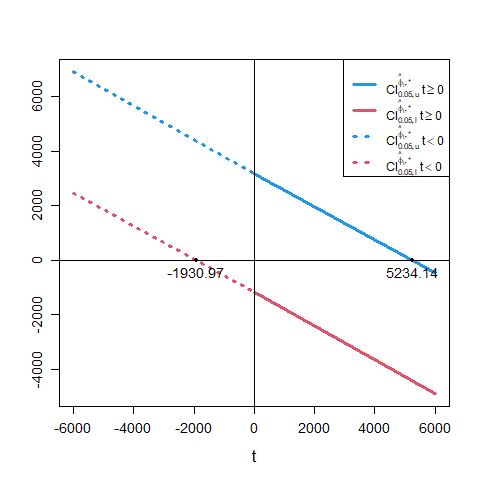}
	\includegraphics[width = 8cm, height = 8cm]{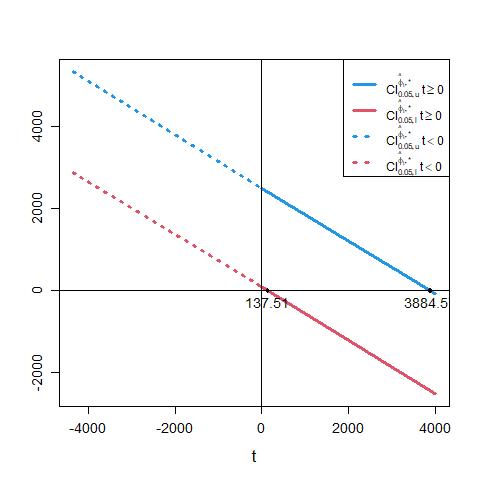}
	\caption{Plots of upper and lower bounds of $\text{CI}_{1-\tau}^{\phi_t,*}$ for the LATT of male (left) and female (right) workers. We set $\tau = 0.05$ and $|\rho_{Y}|=|\rho_{D}|=1$. The maximum estimates of the sensitivity parameters shown in Table \ref{table1} are calibrated to generate the result. 
	}
	\label{figure8}
\end{figure}
\clearpage
\bibliographystyle{ECTA}
\bibliography{ref_OVB}

\clearpage

\end{document}